\documentclass[11pt,reqno]{amsart}
\allowdisplaybreaks[4]
\usepackage{amsthm} 

\newtheorem{example}{Example}
\usepackage{graphicx} 
\usepackage{epsfig}
\usepackage{amsthm}
\usepackage{amssymb}
\usepackage{amsmath}
\usepackage{tcolorbox}
\usepackage{cite}
\newtheorem{theorem}{Theorem}

\newtheorem{proposition}{Proposition}
\newtheorem{corollary}{Corollary}

\newtheorem{lemma}{Lemma}
\newtheorem{remark}{Remark}
\newcommand{\be}{\begin{equation}}
	\newcommand{\ee}{\end{equation}}
\newcommand{\bea}{\begin{eqnarray}}
	\newcommand{\eea}{\end{eqnarray}}
\newcommand{\ba}{\begin{array}}
	\newcommand{\ea}{\end{array}}
\newcommand{\bean}{\begin{eqnarray*}}
	\newcommand{\eean}{\end{eqnarray*}}

\newcommand{\la}{\lambda}

\newcommand{\Om}{\Omega}

\newcommand{\pa}{\partial}

\begin{document}
	
	\title{The tau functions of the constrained CKP hierarchy}
	\author{Danqi Chen$^{1}$, Jipeng Cheng$^{1,2*}$, Shen Wang$^{1}$}
	\dedicatory {$^{1}$ School of Mathematics, China University of
		Mining and Technology, \ Xuzhou, Jiangsu 221116, China,\\
		$^{2}$ Jiangsu Center for Applied Mathematics (CUMT), \ Xuzhou, Jiangsu 221116, China.
	}

	\thanks{*Corresponding author. Email: chengjp@cumt.edu.cn;\ chengjipeng1983@163.com.}
	\begin{abstract}
	The CKP hierarchy is one important sub-hierarchy of the KP hierarchy, which is quite special due to its tau function. Here we construct the tau functions for the constrained CKP hierarchy $(L^k)_{<0}=\sum_{i=1}^{m}\big(q_{1,i}\partial^{-1}q_{2,i}-(-1)^kq_{2,i}\partial^{-1}q_{1,i}\big)$ with $k$ being odd or even positive integer by using the CKP Darboux transformations.\\ \textbf{Keywords}: constrained CKP hierarchy; tau functions; CKP Darboux transformations.\\
	{\bf MSC 2020}: 35Q53, 37K10, 35Q51 \\
	{\bf PACS}: 02.30.Ik
	\end{abstract}
	
	\maketitle

\section{Introduction}

The Kadomtsev-Petviashvili (KP) hierarchy is one of the important research topics in theoretical and mathematical physics \cite{date1983,dickey2003,Mulase1994}. The CKP hierarchy \cite{Arthamonov-2023, chang2013, date1981JPAJ, Krichever-CMP-2021, Loris-1999, vandeleur2012, WangS-Non-2025, yang2021}, a sub-hierarchy of the KP hierarchy corresponding to the infinite-dimensional Lie algebra $c_\infty$ \cite{date1981JPAJ, vandeleur2012, yang2021, Kacvandeleur2023}, is widely applied in 2D conformal field theories, topological field theories and matrix models (e.g. \cite{lishihao,lichunxia,Krichever-CMP-2021,Arthamonov-2023}), where its tau functions act as partition functions. The CKP hierarchy is defined by the following Lax equation:
	\begin{align}
		L_{t_n} = [(L^n)_{\geq0},L], \quad n=1,3,5,\ldots,\label{CKP Lax eq}
	\end{align}
	with the Lax operator $L=\partial+\sum_{i=1}^{\infty}u_{i+1}\partial^{-i}$ satisfying the CKP constraints below
	\begin{align}
		L^* = -L\label{CKP-con}.
	\end{align}
	Here $\partial=\partial_x$, $u_i=u_i(t)$, $t=(t_1=x, t_3,t_5,\ldots)$, the projection $(\sum_i a_i \partial^i)_{\geq0}=\sum_{i\geq0}a_i\partial^i$, and the adjoint operation * is defined by $(\sum_ia_i\partial^i)^*=\sum_i(-1)^i\partial^ia_i$. In what follows, $n$ is usually the positive odd number. 
	
	 Furthermore, the Lax operator $L$ can be described by the dressing operator $W=1+\sum_{i=1}^{\infty}w_i\partial^{-i}$ in the following way: $L=W \partial  W^{-1}$. Then the CKP hierarchy \eqref{CKP Lax eq} \eqref{CKP-con} can also be expressed by
	\begin{align}
		W_{t_n}=-(L^n)_{<0}W, \quad W^*=W^{-1}.\label{CKP-express}
	\end{align}
    If define the wave function $\psi(t,\lambda)=W(e^{\xi(t,\lambda)})$ with $\xi(t,\lambda)=\sum_{i=1}^{+\infty}t_{2i-1}\lambda^{2i-1}$, we can find $L(\psi)=\lambda \psi$, $\pa_{t_n}\psi=(L^n)_{\geq0}\psi$. The whole CKP hierarchy is equivalent to the following bilinear equation \cite{date1981JPAJ,Krichever-CMP-2021,Zabrodin2021, WangS-Non-2025}:
	\begin{align}\label{bili}
		{\rm Res}_\lambda \psi(t,\lambda)\psi(t',-\lambda)=0,
	\end{align}
	where ${\rm Res}_\lambda\sum_i a_i\lambda^i=a_{-1}$. In \cite{chang2013}, it is proved that there exists a tau function $\tau(t)$ such that
	\begin{align}\label{bili*}
		\psi(t,\lambda)
		=e^{\xi(t,\lambda)}\sqrt{\varphi(t,\lambda)}\frac{\tau(t-2[\lambda^{-1}])}{\tau(t)},\quad \varphi(t,\lambda)=1+\lambda^{-1}\partial_x\log\frac{\tau(t-2[\lambda^{-1}])}{\tau(t)},
	\end{align}
	where $[\lambda^{-1}]=(\lambda^{-1},\frac{1}{3}\lambda^{-3},\frac{1}{5}\lambda^{-5},\ldots)$. Notice that there exist one square root term in the relation \eqref{bili*} of the CKP wave function with the tau function, which brings much difficulty and challenge in the study of the CKP hierarchy. For the other integrable hierarchies \cite{Krichever-Zabrodin-LMP-2021, Zabrodin-2023, Wangshen2025} related with the CKP hierarchy, there are the similar properties in the tau functions. Therefore, the associated integrable hierarchies derived from the CKP hierarchy have also attracted considerable attention in recent years \cite{Krichever-Zabrodin-LMP-2021, yang2021, Zabrodin-2023,WangS-PhD-2026,Wangshen2025,Arthamonov-2023,Kacvandeleur2023}.

	In this paper, we will focus on the study of the following system:
	\begin{align}
		&(L^k)_{<0}=\sum_{i=1}^{m}\big(q_{1,i}\partial^{-1}q_{2,i}-(-1)^kq_{2,i}\partial^{-1}q_{1,i}\big),
		\label{Lax op}\\
		&L_{t_n}=[B_n,L], \quad L^*=-L,\quad B_n=(L^n)_{\geq0}, \label{LAX} \\
		&\partial_{t_n}q_{a,i}=(L^n)_{\geq0}(q_{a,i}), \quad a=1,2, \quad 1\leq i \leq m,\label{LAx}
	\end{align}
	which is called the $(k,m)$-constrained CKP hierarchy \cite{Loris-1999,he2007jmp, loris2001jpa}. Here $k$ and $m$ are all fixed positive integers. Note that the constrained CKP hierachy is a special case of the constrained KP hierarchy \cite{ChengY-1992,ChengY-1994,Sidorenko-1991,OeveleStrampp} $(L^k)_{<0}=\sum_{i=1}^mf_i\partial^{-1}g_i$, which contains many famous integrable hierarchies, including Ablowitz-Kaup-Newell-Segur hierarchy, Yajima-Oikawa hierarchy and so on. For the constrained CKP hierarchy, there exist many interesting results, in the aspects of the Darboux transformations, symmetries, bilinear equations and so on \cite{he2007jmp, Loris-1999,loris2001jpa, Tiankl2011,yang2021,cheng2014}. However the corresponding results are mainly done for the constrained CKP hierarchy when $k$ is odd, and there are also few results for the tau functions expressed by  \eqref{bili*}.

Here we would like to point out that the $(k,m)$-constrained CKP hierarchy with even $k$ is firstly given in \cite{loris2001jpa}, which is not the classical symmetry reduction.
In this paper, we will consider the cases of odd and even $k$ for the constrained CKP hierarchy in the framework of the tau functions. Our main results are given as follows.
	\begin{theorem}\label{k,m}
			The system of the $(k,m)$-constrained CKP hierarchy \eqref{Lax op}-\eqref{LAx} is equivalent to the following bilinear equations:
			\begin{align}
				&{\rm Res}_\la \la^k\psi(t,\la)\psi(t',-\la)=\sum^m_{i=1}(q_{1,i}(t)q_{2,i}(t')-(-1)^kq_{2,i}(t)q_{1,i}(t')), \label{theorem:bili1}\\
				&{\rm Res}_\la \psi(t,\la)\Om(q_{a,j}(t'),\psi(t',-\la))=-q_{a,j}(t), \quad a=1,2, \quad j=1,2,\ldots,m,
				\label{theorem:bili2}
			\end{align}
			where $\psi(t,\la)=(1+\sum_{i=1}^{+\infty} w_i\la^{-i})e^{\xi(t,\la)}$ is the CKP wave function, and $\Om$ is the squared eigenfunction potential \cite{cheng2014,oevel1993pa} defined by $\Om(f_1,f_2)_{t_n}={\rm Res}_{\pa}(\pa^{-1}f_2L^nf_1\pa^{-1})$ for $f_{i,t_n}=(L^n)_{\geq0}(f_i)$ and the CKP Lax operator $L$. Here $\psi(t,\la)$ is related with CKP Lax operator $L$ by $L=W\pa W^{-1}$ and $W=1+\sum_{i=1}^{\infty}w_i\pa^{-i}$, with $w_i$ in $\psi(t,\la)$.

Further, if assume that $\tau$ is the $tau$ function of the $(k,m)$-constrained CKP hierarchy and set $\rho_{a,i}=\tau q_{a,i}$, then \eqref{theorem:bili1} \eqref{theorem:bili2} can be expressed by
			\begin{align}
&{\rm Res}_\lambda \lambda^k\sqrt{\varphi(t,\lambda)\varphi(t',-\lambda)}\tau(t-2[\lambda^{-1}])\tau(t'+2[\lambda^{-1}])e^{\xi(t-t',\lambda)}\nonumber\\
				&=\sum_{i=1}^m\left(\rho_{1,i}(t)\rho_{2,i}(t')-(-1)^k\rho_{2,i}(t)\rho_{1,i}(t')\right),\label{res1}\\
&{\rm Res}_\lambda \la ^{-1} e^{\xi(t-t',\lambda)}\sqrt{\frac{\varphi(t,\lambda)}{\varphi(t',-\lambda)}}\tau(t-2[\lambda^{-1}])\nonumber\\
&\times \left(\tau(t')\rho_{a,i}(t'+2[\lambda^{-1}])+\rho_{a,i}(t')\tau(t'+2[\lambda^{-1}])\right)=2\rho_{a,i}(t)\tau^2(t').\label{res2}
			\end{align}

	\end{theorem}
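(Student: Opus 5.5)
The plan is to prove the equivalence in two directions using the standard Sato-theory lemma: for pseudo-differential operators $P,Q$,
\[
{\rm Res}_\la P(e^{\xi(t,\la)})\,Q(e^{-\xi(t',\la)})={\rm Res}_\pa(P\pa^{\,0}Q^*)\text{-type kernel},
\]
more precisely ${\rm Res}_\la P(x,\pa)e^{x\la}\,Q(x',\pa')e^{-x'\la}=\sum_{i}(\text{coefficients of }PQ^*)$, which vanishes identically in $x-x'$ iff $(PQ^*)_{<0}=0$. We also use the identity ${\rm Res}_\la \big(W(e^{\xi})\big)\,\big(\pa^{-1}f_2 \ldots\big)$ for the integral kernel of $f\pa^{-1}g$, namely ${\rm Res}_\la W\pa^{-1}f\pa^{-1}\ldots$, together with the squared eigenfunction potential formula $\Om(q,\psi(t',-\la))$, whose $\la$-expansion gives $\Om(q(t'),\psi(t',-\la))=-\la^{-1}(1+O(\la^{-1}))e^{-\xi(t',\la)}\cdot(\ldots)$, i.e. it represents $\pa^{-1}$ applied to $q\,\psi^*$, where by $W^*=W^{-1}$ we have $\psi^*(t,\la)=\psi(t,-\la)$.

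\emph{Forward direction.} Assume \eqref{Lax op}--\eqref{LAx}. First I would show that the left side of \eqref{theorem:bili1} minus its right side is annihilated in $t'$-dependence, i.e. it satisfies the same $t'_n$-flows; then it suffices to check at $t'_n=t_n$ for $n\ge3$ with $x'$ free. At equal higher times, the standard lemma gives ${\rm Res}_\la \la^k\psi(t,\la)\psi(t',-\la)=$ kernel of $(W\pa^kW^{-1})_{<0}=(L^k)_{<0}$, which by the kernel identity $\;f\pa^{-1}g\mapsto f(x)g(x')$ equals the right side of \eqref{theorem:bili1}. For the $t'$-extension, use that $\psi(t',-\la)$ and $q_{a,i}(t')$ all satisfy $\pa_{t'_n}=B_n^{(\prime)}$ acting appropriately (for $\psi(t',-\la)$ this is the adjoint flow $-B_n^*$, which equals $B_n$ up to sign because $L^*=-L$, $n$ odd); then both sides solve the same linear evolution and agree initially. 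Similarly for \eqref{theorem:bili2}: at equal higher times ${\rm Res}_\la\psi(t,\la)\Om(q(t'),\psi(t',-\la))$ is the kernel of $W\pa^{-1}W^{-1}$-like operator applied to $q$, giving $-q(t)$ after using $\pa_{x'}\Om(q,\psi^*)=q\psi^*$ and the lemma with $(W\pa^{-1}W^*)$; the $t'$-dependence is handled by the defining flows of $\Om$.

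\emph{Converse.} From \eqref{theorem:bili1} with $t'=t$ shifted only in $x'$, the lemma yields $(L^k)_{<0}=\sum(q_{1}\pa^{-1}q_2-(-1)^kq_2\pa^{-1}q_1)$. Taking $k=0$-like reasoning is unavailable, so the hierarchy equations $W_{t_n}=-(L^n)_{<0}W$ and $W^*=W^{-1}$ are taken from the CKP part (the wave function is assumed CKP, so \eqref{bili} holds). Differentiating \eqref{theorem:bili2} in $t_n$ and applying the lemma gives $\pa_{t_n}q=B_n(q)$. This step — extracting the eigenfunction evolution from \eqref{theorem:bili2} — I expect to be the main obstacle, since $\Om$ is only defined up to constants and its $t'$-flows must be matched carefully.

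\emph{Tau form.} Substitute \eqref{bili*}, writing $\psi(t',-\la)=e^{-\xi(t',\la)}\sqrt{\varphi(t',-\la)}\tau(t'+2[\la^{-1}])/\tau(t')$, multiply \eqref{theorem:bili1} by $\tau(t)\tau(t')$, giving \eqref{res1}. For \eqref{res2} I will use the known expression $\Om(q,\psi(t',-\la))=-\la^{-1}e^{-\xi(t',\la)}\big(q(t')\psi\text{-factor}+q(t'+2[\la^{-1}])\ldots\big)/(2\sqrt{\varphi(t',-\la)})$, i.e. the CKP analogue of the KP formula $\Om(q,\psi^*)=\la^{-1}\,\rho(t'+[\la^{-1}])e^{-\xi}/\tau(t')$, which I will verify by checking its $x'$-derivative equals $q\psi^*$ using the definition of $\varphi$; substituting and multiplying by $\tau(t)\tau^2(t')$ yields \eqref{res2}.
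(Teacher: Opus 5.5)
Your proposal is correct. The converse and the passage to \eqref{res1}--\eqref{res2} are essentially the paper's argument; the real difference is the forward implication to \eqref{theorem:bili1}.

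The paper never specializes to coinciding higher times in that direction. It writes $\la^k\psi=(L^k)_{\geq0}(\psi)+\sum_i\bigl(q_{1,i}\Om(q_{2,i},\psi)-(-1)^kq_{2,i}\Om(q_{1,i},\psi)\bigr)$ and applies the differential operator $(L^k)_{\geq0}$ to \eqref{bili}. It then removes the $\Om$-terms using \eqref{theorem:bili2} read with $t\leftrightarrow t'$, $\la\to-\la$, i.e.\ ${\rm Res}_\la\Om(q(t),\psi(t,\la))\psi(t',-\la)=q(t')$; \eqref{theorem:bili2} itself is imported from \cite[Proposition 3]{cheng2014}.

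Your route works differently. The kernel lemma at $t'_{\geq3}=t_{\geq3}$ gives $(W\pa^kW^*\pa)(\Delta^0)=((L^k)_{<0}\pa)(\Delta^0)$, and both sides obey $\pa_{t'_n}H=B_n(t')H$, so their difference vanishes as a formal series. This is self-contained, does not use \eqref{theorem:bili2}, and is simply the converse run backwards. The paper's route is shorter and exhibits \eqref{theorem:bili1} as an algebraic consequence of \eqref{bili}, \eqref{theorem:bili2} and the constraint. State precisely that $\psi(t',-\la)$ evolves by exactly $B_n(t')$, since $-B_n^*=B_n$. An extra sign, which your phrase ``up to sign'' allows, would break the uniqueness argument.

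There are three smaller corrections.

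First, the step you expect to be the main obstacle is immediate. $\Om(q(t'),\psi(t',-\la))$ involves only $t'$, so differentiate \eqref{theorem:bili2} in the unprimed $t_n$ and pull $B_n$ (acting in $x$) out of ${\rm Res}_\la$. This gives $\pa_{t_n}q=B_n(q)$ with no integration constants in play, exactly as in the paper.

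Second, you need not assume \eqref{bili} in the converse. Since $\pa_{x'}\Om(q(t'),\psi(t',-\la))=q(t')\psi(t',-\la)$, the $x'$-derivative of \eqref{theorem:bili2} yields $q_{a,j}(t')\,{\rm Res}_\la\psi(t,\la)\psi(t',-\la)=0$.

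Third, your planned check of the $\Om$-formula by its $x'$-derivative ``using the definition of $\varphi$'' cannot work as stated.
\begin{itemize}
\item The identity encodes that $q$ is an eigenfunction. For $\tau=1$ it reads $(\tilde q+q)_x+\la(\tilde q-q)=0$ with $\tilde q=q(t-2[\la^{-1}])$. This holds for $q=x^3+6t_3$ but fails for $q=x^3$.
\item An $x'$-derivative determines $\Om$ only modulo the normalization of Remark \ref{remark:omega1}.
\end{itemize}
Either cite the formula, as the paper does, or derive it from \eqref{theorem:bili2} with $t=t'-2[\la^{-1}]$:
\begin{itemize}
\item Then $e^{\xi(t-t',\mu)}=(\la-\mu)/(\la+\mu)$.
\item Write $\psi(t,\mu)=w(t,\mu)e^{\xi(t,\mu)}$ and $\Om(q(t'),\psi(t',-\mu))=\mu^{-1}e^{-\xi(t',\mu)}\omega(t',\mu)$ with $\omega=-q(t')+O(\mu^{-1})$. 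The residue becomes $2w(t,-\la)\omega(t',-\la)+q(t')=-q(t'-2[\la^{-1}])$.
\item Combine this with $w(t'-2[\la^{-1}],-\la)=\varphi(t',\la)/w(t',\la)$, which follows from $\varphi(t-2[\la^{-1}],-\la)=\varphi(t,\la)$. This gives exactly the formula you need for \eqref{res2}.
\end{itemize}
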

\begin{remark}\label{remark:omega1}
Notice that $(L^*)^n=-L^n$, so we have $\Omega(f_1,f_2)=\Omega(f_2,f_1)$ (please refer to \cite{cheng2014}). And the integral constant in $\Omega(q_{a,j},\psi)$ is fixed by the way below:
\begin{align*}
\Omega(q_{a,j},\psi)=(\pa^{-1}\cdot q_{a,j}\cdot W)\cdot(e^{\xi(t,\la)}).
\end{align*}
Here we firstly compute $\pa^{-1}q_{a,j}W$ and then use $\pa^l(e^{\xi(t,z)})=z^le^{\xi(t,z)}$ to fix the integral constant in $\Omega(q_{a,j},\psi)$.
\end{remark}
	
\begin{theorem}\label{THE}
Given $(h_1,\cdots,h_M)=(f_1,f_1^{(k)},\cdots,f_1^{(kN_1)},\cdots,f_m,f_m^{(k)},\cdots,f_m^{(kN_m)},\varphi_1,\cdots,\varphi_P) $ with $f_{i}$ $(1\leq i\leq m)$ satisfying $f_{i_{t_n}}=f_i^{(n)}$ $(f_i^{(n)}\triangleq \pa_x^n(f_i))$ and  $\varphi_j=e^{\xi(t,\la_j)}(1\leq j\leq P)$, we have
\begin{align*}
&\tau^{\{M\}}=\sqrt{IW_{M,M}(h_1,\ldots,h_M;h_1,\ldots,h_M)} ,\\
&\rho_{1,j}^{\{M\}}=\frac{IW_{M,M+1}(h_1,\ldots,h_M;h_1,\ldots,h_M,f_j^{(kN_j+k)})}{\sqrt{IW_{M,M}(h_1,\ldots,h_M;h_1,\ldots,h_M)}},\\
&\rho_{2,j}^{\{M\}}=\frac{(-1)^{M+p(j)}IW_{M-1,M}(h_1,\ldots,\widehat{h_{p(j)}},\ldots,h_M;h_1,\ldots,h_M)}{\sqrt{IW_{M,M}(h_1,\ldots,h_M;h_1,\ldots,h_M)}},
\end{align*}
satisfy \eqref{res1} \eqref{res2} for the $(k,m)$-constrained CKP hierarchy, where $M=m+P+\sum^m_{l=1}N_l$, $p(j)=j+\sum_{l=1}^jN_l$, $1\leq j \leq m$ and $$(h_1,\ldots,\widehat{h_l},\ldots,h_M)=(h_1,\ldots,h_{l-1},h_{l+1},\ldots,h_M).$$
Here $IW_{N,M}$ is the generalized Wronskian determinant \cite{he2002} defined by
\[
IW_{N,M}(g_1,  \dots, g_N; f_1, \dots, f_M) = \begin{vmatrix}
	\Omega (f_1,g_1)  & \dots & \Omega( f_M ,g_1)  \\
	\vdots & \ddots & \vdots \\
	\Omega( f_1 ,g_N)  & \dots & \Omega( f_M, g_N)  \\
	f_1 & \dots & f_M \\
    f_1^{(1)} & \dots & f_M^{(1)} \\
	\vdots & \ddots & \vdots \\
	f_1^{(M-N-1)} & \dots & f_M^{(M-N-1)}
\end{vmatrix}.
\]

Further, the $q_{a,i}$ in the $(k,m)$-constrained CKP hierarchy can be given by
\begin{align}
q_{1,i}&=\frac{IW_{M,M+1}(h_1,\ldots,h_M;h_1,\ldots,h_M,f_j^{(kN_j+k)})}{IW_{M,M}(h_1,\ldots,h_M;h_1,\ldots,h_M)}\\
q_{2,i}&=\frac{(-1)^{M+p(j)}IW_{M-1,M}(h_1,\ldots,\widehat{h_{p(j)}},\ldots,h_M;h_1,\ldots,h_M)}{IW_{M,M}(h_1,\ldots,h_M;h_1,\ldots,h_M)}
\end{align}
\end{theorem}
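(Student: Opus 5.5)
The proof will build the solution by iterating CKP Darboux transformations on the trivial seed, and verify \eqref{res1} \eqref{res2} through Theorem \ref{k,m}. We start from $L^{\{0\}}=\pa$, $W^{\{0\}}=1$, $\tau^{\{0\}}=1$. Here $(L^k)_{<0}=0$, so the $(k,m)$-constrained system holds trivially with all $q_{a,i}=0$. The functions $f_i$ and $\varphi_j=e^{\xi(t,\la_j)}$ are eigenfunctions of this seed, since $f_{i,t_n}=\pa^n f_i$.

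\textbf{Darboux step.} We use the CKP Darboux transformation $T(h)=1-h\,\Omega(h,h)^{-1}\pa^{-1}h$. This is the composition of a KP Darboux transformation $h\pa h^{-1}$ with the adjoint one $\pa^{-1}\Omega(h,h)^{-1}\cdots$, arranged so that $T^*=T^{-1}$ and hence $L^*=-L$ is preserved. We apply it successively with $h_1,\dots,h_M$, each time transforming the remaining $h_l$ into new eigenfunctions. The standard Wronskian-type identities for $M$ steps (as in \cite{he2002}) then give the following:
\begin{itemize}
\item $W^{\{M\}}$ is expressed through $IW$ determinants;
\item $T^{\{M\}}(g)=IW_{M,M+1}(h;h,g)/IW_{M,M}(h;h)$;
\item the adjoint-type images of $h_l$ are ratios $(-1)^{\cdots}IW_{M-1,M}(\dots\widehat{h_l}\dots;h)/IW_{M,M}$;
\item the tau function transforms as $\tau^{\{M\}}=\sqrt{IW_{M,M}(h;h)}$.
\end{itemize}
The square root reflects that each CKP step multiplies $\tau^2$ by $\Omega(h,h)$, matching \eqref{bili*}. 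I would prove the $\tau$ formula by induction, using the one-step rule $\tau\mapsto\tau\sqrt{\Omega(h,h)}$ and Jacobi/Sylvester determinant identities for $IW$.

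\textbf{Constraint.} Next I would show that $(L^{\{M\}})^k_{<0}$ has the form \eqref{Lax op}. Write $(L^{\{M\}})^k=T\pa^kT^{-1}$ with $T=T^{\{M\}}$ and $T^{-1}=T^*$. The negative part of $T\pa^kT^{-1}$ is a sum of terms $T(\text{eigenfunction})\,\pa^{-1}\,(T^{-1})^*(\text{adjoint eigenfunction})$, one for each $h_l$. The chain structure $f_i,f_i^{(k)},\dots,f_i^{(kN_i)}$ is used here: since $\pa^k f_i^{(kr)}=f_i^{(k(r+1))}$ and $T$ annihilates every $h_l$, the contributions telescope. This leaves only the top term $T(f_j^{(kN_j+k)})$ paired with the adjoint image of $h_{p(j)}=f_j^{(kN_j)}$. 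The $\varphi_j$ contribute nothing because $\pa^k\varphi_j=\la_j^k\varphi_j$ lies in their own span. The $(-1)^k$-symmetrized form then follows from $L^*=-L$, i.e.\ $((L^k)_{<0})^*=(-1)^k(L^k)_{<0}$. This identifies $q_{1,j}=T(f_j^{(kN_j+k)})$ and $q_{2,j}$ as the stated ratios. Multiplying by $\tau^{\{M\}}$ gives $\rho_{a,j}$.

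\textbf{Evolution and conclusion.} Finally I check \eqref{LAx}. Here $q_{a,j}$ are (adjoint) eigenfunctions of $L^{\{M\}}$ because they are Darboux images of seed eigenfunctions, and adjoint and ordinary eigenfunctions coincide under $L^*=-L$. Theorem \ref{k,m} then yields \eqref{res1} and \eqref{res2}.

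The main obstacle is the telescoping computation of $(T\pa^kT^{-1})_{<0}$ with the precise signs $(-1)^{M+p(j)}$. For this step I would first handle $m=1$, $N_1=1$, $P=0$ explicitly to fix the conventions. I would then use the general formula $(T\pa^kT^{-1})_{<0}=\sum_l T(\pa^k \tilde h_l)\pa^{-1}\hat h_l$, where $\hat h_l$ are the dual kernel functions of $T^{-1}$, and reduce it by cofactor expansion along the column of $f_j^{(kN_j+k)}$.
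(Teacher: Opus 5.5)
\textbf{Verdict: genuine gap.} You follow the paper's route: trivial seed $L=\partial$, iterated CKP Darboux transformations with $h_1,\dots,h_M$, vanishing of all but the chain-top terms, then Theorem~\ref{k,m}. Your inductive proof of $\tau^{\{M\}}=\sqrt{IW_{M,M}}$ via $\tau\mapsto\tau\sqrt{\Omega(h,h)}$ and a Schur-complement identity is fine; the paper just cites \cite[Corollary 6]{WangS-Non-2025}. The gap is in the step you yourself call the main obstacle.

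\textbf{The one-family formula is wrong for this $T$.} The formula $(T\partial^kT^{-1})_{<0}=\sum_l T(\partial^k\tilde h_l)\partial^{-1}\hat h_l$, one term per $h_l$, is the KP formula for a \emph{differential} Darboux operator. The CKP operator $T=T^{\{\overrightarrow{M}\}}$ also contains $M$ factors $T_I$, and it has order zero:
\[
T=1-\sum_l\hat h_l\partial^{-1}h_l,\qquad \hat h_l=\sum_{l'}(\Omega^{-1})_{ll'}h_{l'},\qquad \Omega_{ll'}=\Omega(h_l,h_{l'}),
\]
with $T^{-1}=T^*=1+\sum_l h_l\partial^{-1}\hat h_l$. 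Hence $(T\partial^k)_{<0}=-(-1)^k\sum_l\hat h_l\partial^{-1}h_l^{(k)}\neq0$. Combining this with the cross terms via $\partial^{-1}F'\partial^{-1}=F\partial^{-1}-\partial^{-1}F$ (integration constants fixed by $\Omega$) gives
\[
(T\partial^kT^{-1})_{<0}=\sum_l T(h_l^{(k)})\,\partial^{-1}\hat h_l-(-1)^k\sum_l\hat h_l\,\partial^{-1}\,T(h_l^{(k)}).
\]
So there are two families and $2M$ terms, not $M$. Already for $M=1$ the second term $-(-1)^k\hat h\,\partial^{-1}T(h^{(k)})$ is present and generically nonzero. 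Also, $\hat h_l$ is not $(T^{-1})^*$ applied to an adjoint eigenfunction $h_l$: here $(T^{-1})^*=T$ and $T(h_l)=0$.

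\textbf{Symmetry cannot supply the missing family.} Suppose the negative part really were $X=\sum_j q_{1,j}\partial^{-1}q_{2,j}$. Then $X^*=(-1)^kX$ would force $X=\tfrac12\sum_j\big(q_{1,j}\partial^{-1}q_{2,j}-(-1)^kq_{2,j}\partial^{-1}q_{1,j}\big)$. That is, the $q_{a,j}$ would be off by a factor $1/\sqrt2$ from the statement; in fact the one-family identity is simply false. Symmetry is only a consistency check, and both families must be computed.

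\textbf{How to repair it.} The paper does this in Lemma~\ref{lemma:TLKT} and Proposition~\ref{Lax_(k,m)}. It splits $T[q]=T_D[q^{[1]}]T_I[q]$, so that each KP factor produces one family, and then inducts on $M$. Lemma~\ref{lem:T} identifies $\hat h_j=\big(T^{[\overrightarrow{M}\setminus\{j\},\overrightarrow{M}]}(h_j)\big)^{-1}$. The direct computation above is an equally valid substitute.

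With the two-family formula in hand, the rest of your plan goes through:
\begin{itemize}
\item $T(h_l^{(k)})=0$ unless $h_l=f_j^{(kN_j)}$, and $T(\varphi_i^{(k)})=\lambda_i^kT(\varphi_i)=0$, so both families collapse to the chain tops at once.
\item Expanding $IW_{M-1,M}$ along its last row (Cramer's rule) gives $\hat h_l=(-1)^{M+l}IW_{M-1,M}(\dots,\widehat{h_l},\dots;h)/IW_{M,M}$, which fixes the sign $(-1)^{M+p(j)}$.
\item For odd $n$, eigenfunctions and adjoint eigenfunctions coincide, which is what Theorem~\ref{k,m} requires.
\end{itemize}
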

\begin{remark}
Here we would like to explain how to compute $\Omega(f,g)$. When one of $f$ and $g$ is the sum of $e^{\xi(t,\la_i)}$, $\Omega(f,g)$ should be computed by the way in Remark \ref{remark:omega1}, that is $$\Omega(f(t),e^{\xi(t,\la_i)})=\Big(\partial^{-1}\cdot f(t)\Big)(e^{\xi(t,\la_i)}),$$  which is just the infinite integral with respect to $x$ without any integral constant, that is $\Omega(f,g)=\int fg dx $.
When $f$ and $g$ are the polynomials of $t$, $\Omega(f,g)$ should be computed by
\begin{align*}
\Omega(f(t),g(t))=\sum_{i=0}^{+\infty}\int_0^1 t_iA_{2i+1}(yt) dy,
\end{align*}
where $A_{2i+1}(t)=\Omega(f,g)_{t_{2i+1}}={\rm Res}_{\pa} (\pa^{-1}g(L^{2i+1})_{\geq0}f\pa^{-1})$.
\end{remark}
The remaining part of this paper is as follows. The bilinear equations of the constrained CKP hierarchy are investigated in Section 2. In Section 3, the Darboux transformations of the constrained CKP hierarchy are discussed.\\

\section{The bilinear equations of the constrained CKP hierarchy}
In this section, we investigate some basic facts on the $(k,m)$-constrained CKP hierarchy. We first demonstrate that the system is well-defined, and then prove the equivalence between its Lax representation and bilinear equation, that is, Theorem 1.

\subsection{The $(k,m)$-constrained CKP hierarchy}
The $(k,m)$-constrained CKP hierarchy \cite{Loris-1999,loris2001jpa, he2007jmp} is defined as follows:
\begin{align}
	&(L^k)_{<0}=\sum_{i=1}^{m}\big(q_{1,i}\partial^{-1}q_{2,i}-(-1)^kq_{2,i}\partial^{-1}q_{1,i}\big),
	\label{Lax op2}\\
	&L_{t_n}=[B_n,L], \quad L^*=-L,\quad B_n=(L^n)_{\geq0}, \label{LAX2} \\
	&\partial_{t_n}q_{a,i}=B_n(q_{a,i}), \quad a=1,2, \quad 1\leq i\leq m\label{LAx2}.
\end{align}

Let us start from showing that the $(k,m)$-constrained CKP hierarchy is well-defined, which means that
 \begin{align*}
 ((L^k)_{<0})^*=(-1)^k(L^k)_{<0} ,\quad
 \partial_{t_n}((L^k)_{<0}-A_{k,m})=0,
\end{align*}
still hold after substituting \eqref{Lax op2}-\eqref{LAX2}. Here we assume $A_{k,m}\triangleq\sum_{i=1}^{m}\big(q_{1,i}\partial^{-1}q_{2,i}-(-1)^kq_{2,i}\partial^{-1}q_{1,i}\big)$. Notice that
\begin{align}
	A^*_{k,m}=(-1)^kA_{k,m},
\end{align}
 which implies $((L^k)_{<0})^*=(-1)^k(L^k)_{<0}$ by $(L^k)_{<0}=A_{k,m}$. We can know by $L_{t_n}=[B_n,L]$ and $(L^k)_{<0}=A_{k,m}$ that
\begin{align*}
	\partial_{t_n}(L^k)_{<0}=[B_n,(L^k)_{<0}]_{<0}=[B_n,A_{k,m}]_{<0}.
\end{align*}
Then by the formulas(please refer to \cite{oevel1993pa}) for a pseudo-differential operator $A$ and a function $f$:
\begin{align} (A_{\geq0}f\partial^{-1})_{<0}=A_{\geq0}(f)\cdot\partial^{-1}, (\partial^{-1}fA_{\geq0})_{<0}=\partial^{-1}\cdot A_{\geq0}^*(f),
\end{align}
we can obtain
\begin{align}\label{AKM}
	[B_n,A_{k,m}]_{<0}=\sum_{i=1}^m\big(B_n(q_{1,i})\cdot\partial^{-1}q_{2,i}-(-1)^kB_n(q_{2,i})\cdot\partial^{-1}q_{1,i}\nonumber\\
	+q_{1,i}\partial^{-1}\cdot B_n^*(q_{2,i})+(-1)^kq_{2,i}\partial^{-1}\cdot B_n^*(q_{1,i})\big).
\end{align}
Furthermore, with the help of $\partial_{t_n}q_{a,i}=B_n(q_{a,i})$, we can obtain that $\partial_{t_n}((L^k)_{<0}-A_{k,m})=0$ holds. Therefore the $(k,m)$-constrained CKP hierarchy is well-defined.																																																																																																
\begin{remark}
Here the $(k,m)$-constrained CKP hierarchy is the generation of the Gelfand-Dickey hierarchy of C type \cite{Loris-1999,loris2001jpa}, that is $(L^k)_{<0}=0$. For example, the Lax operator of the $(3,1)$-constrained CKP hierarchy is given by
$$L^3=\pa^3+3u\pa+\frac{3}{2}u_x+q_1\pa^{-1}q_2+q_2\pa^{-1}q_1.$$
It can be found that
\begin{align} u_{t_5}=&\frac{20}{3}q_1q_2u_x-\frac{15}{3}u^2u_x+\frac{20}{3}q_2q_{1,x}u+\frac{20}{3}q_{2,x}q_1u-\frac{5}{3}uu_{xxx}-\frac{25}{6}u_xu_{xx}\nonumber\\ &+\frac{20}{9}q_2q_{1,xxx}+\frac{20}{9}q_1q_{2,xxx}+\frac{10}{3}q_{2,x}q_{1,xx}+\frac{10}{3}q_{1,x}q_{2,xx}-\frac{1}{9}u_{xxxxx},\label{ut5}\\
q_{1,t_5}=&q_{1,xxxxx}+5uq_{1,xxx}+\frac{15}{2}u_xq_{1,xx}+(\frac{10}{3}q_1q_2+5u^2+\frac{35}{6}u_{xx})q_{1,x}\nonumber\\
	&+5uu_x+\frac{5}{3}q_1q_{2,x}+\frac{5}{3}q_{1,x}q_{2}+\frac{5}{3}u_{xxx},\nonumber \\ q_{2,t_5}=&q_{2,xxxxx}+5uq_{2,xxx}+\frac{15}{2}u_xq_{2,xx}+(\frac{10}{3}q_1q_2+5u^2+\frac{35}{6}u_{xx})q_{2,x}\nonumber\\
	&+5uu_x+\frac{5}{3}q_1q_{2,x}+\frac{5}{3}q_{1,x}q_{2}+\frac{5}{3}u_{xxx},\nonumber
	\end{align}
Notice that $q_1=q_2=0$, we can find that \eqref{ut5} becomes the Kaup-Kuperschmidt equation \cite{Konopelchenko1984}.
\end{remark}

\subsection{Proof of Theorem 1}
	First, we prove that \eqref{Lax op}-\eqref{LAx} can imply \eqref{theorem:bili1}-\eqref{theorem:bili2}. Note that \eqref{theorem:bili2} is the spectral representation of the  CKP eigenfunction, which can be found in \cite[Proposition 3]{cheng2014}. Therefore, we only need to prove \eqref{theorem:bili1} holds.
	Actually by
	$L^k=(L^k)_{\geq0}+A_{k,m}$, $L(\psi)=\la\psi$ and $\psi_{t_k}=B_k(\psi)$, we can know
	\begin{align*}
		\la^k\psi(t,\la)
		&=L^k(\psi(t,\la))=(L^k)_{\geq0}(\psi(t,\lambda))+A_{k,m}(\psi(t,\lambda))\\
		&=(L^k)_{\geq0}(\psi(t,\la))+\sum_{i=1}^m(q_{1,i}\Omega(q_{2,i},\psi)-(-1)^kq_{2,i}\Omega(q_{1,i},\psi)),
	\end{align*}
	implying
	\begin{align}\label{psi}
		(L^k)_{\geq0}(\psi)=\la^k\psi-\sum_{i=1}^m(q_{1,i}\Omega(q_{2,i},\psi)-(-1)^kq_{2,i}\Omega(q_{1,i},\psi)).
	\end{align}
	In addition, by applying $(L^k)_{\geq0}$ to the CKP bilinear equation ${\rm Res}_\lambda \psi(t,\lambda)\psi(t',-\lambda)=0$, we have
	\begin{align}\label{rest}
		{\rm Res}_\lambda (L^k)_{\geq0}(\psi(t,\lambda))\cdot\psi(t',-\lambda)=0.
	\end{align}
	Substituting the result for $(L^k)_{\geq0}(\psi)$ (see \eqref{psi}) into \eqref{rest}, and combining with \eqref{theorem:bili2}, we finally conclude that
\begin{align*}
{\rm Res}_\la \la^k\psi(t,\la)\psi(t',-\la)=\sum^m_{i=1}(q_{1,i}(t)q_{2,i}(t')-(-1)^kq_{2,i}(t)q_{1,i}(t')).  \end{align*}
	
	Conversely from \eqref{theorem:bili1}-\eqref{theorem:bili2} to \eqref{Lax op}-\eqref{LAx}, we need the lemmas below.
	\begin{lemma}\label{A}
		\cite{date1983,Cheng-Milanov} Set $A(x)=\sum_i a_i(x)\pa_x^i$, $B(x')=\sum_j b_j(x')\pa_{x'}^j$, then
		\begin{align}
			(A(x)B^*(x)\pa_x)(\Delta^0)={\rm Res}_{\la}A(x)(e^{x\la})B(x')(e^{-x'\la}),
		\end{align}
		where $\Delta^0=(x-x')^0$ and
		\begin{align}\label{pa}
			\begin{cases}
				\pa_x^{-a}(\Delta^0)=0, & a<0;\\
				\pa_x^{-a}(\Delta^0)=\dfrac{(x-x')^a}{a!}, & a\geq0.
			\end{cases}
		\end{align}
	\end{lemma}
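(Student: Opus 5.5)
The plan is to reduce both sides to the symbols of pseudo-differential operators, where Lemma \ref{A} applies term by term. Write $A(x)=\sum_i a_i(x)\pa_x^i$ and $B(x')=\sum_j b_j(x')\pa_{x'}^j$, and expand the right-hand side as
\[
{\rm Res}_\la A(x)(e^{x\la})B(x')(e^{-x'\la})=\sum_{i,j}a_i(x)b_j(x')(-1)^j\,{\rm Res}_\la \la^{i+j}e^{(x-x')\la}.
\]
This is a formal series in $\la$. Using $e^{(x-x')\la}=\sum_{a\ge0}(x-x')^a\la^a/a!$, the residue picks out $a=-i-j-1$. Hence the right-hand side equals
\[
\sum_{i+j\le -1}a_i(x)(-1)^jb_j(x')\frac{(x-x')^{-i-j-1}}{(-i-j-1)!}.
\]

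Next I compute the left-hand side. Since $B^*(x)=\sum_j(-1)^j\pa_x^j\, b_j(x)$, we have $A B^*\pa_x=\sum_{i,j}(-1)^j a_i\pa_x^{i}\pa_x^{j}\, b_j(x)\,\pa_x$. The key observation is how this acts on $\Delta^0=(x-x')^0$ when $b_j$ is read as $b_j(x)$. The right-hand side contains $b_j(x')$, so I should expand $b_j(x)$ around $x'$ by Taylor's formula:
\[
b_j(x)=\sum_{l\ge0}b_j^{(l)}(x')\frac{(x-x')^l}{l!}.
\]
Alternatively, I can move $b_j$ to the left through $\pa^{i+j}$ using the Leibniz rule for pseudo-differential operators. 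The cleaner route is the Taylor one. First, $\pa_x(\Delta^0)$ must be interpreted via \eqref{pa} with $a=-1$, which gives $0$. So I read the convention as acting on the whole operator: expand $AB^*\pa$ as $\sum_s c_s(x,x')\pa_x^s$, with coefficients possibly depending on $x'$ through the Taylor expansion, and then apply \eqref{pa}.

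I would then check the identity on monomials. Because both sides are bilinear in the coefficients, it suffices to take $A=a(x)\pa^i$ and $B=b(x')\pa^j$. For these, the left-hand side is $(-1)^j a(x)\pa^{i+j}b(x)\pa(\Delta^0)$. Writing $b(x)=\sum_l b^{(l)}(x')(x-x')^l/l!$, the factor $(x-x')^l/l!=\pa^{-l}(\Delta^0)$ is itself given by \eqref{pa}. So we need $\pa^{i+j}\cdot(x-x')^l/l!\cdot\pa(\Delta^0)$. Here is the delicate point: $\pa(\Delta^0)$ vanishes literally, but operator composition must be taken before evaluation. I would compute $\pa^{i+j}\circ ((x-x')^l/l!)\circ\pa$ as a pseudo-differential operator. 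The cleanest statement is the standard identity
\[
\pa^{s}\circ\frac{(x-x')^l}{l!}=\sum_r\binom{s}{r}\frac{(x-x')^{l-r}}{(l-r)!}\pa^{s-r},
\]
after which one applies $\pa^{s-r+1}(\Delta^0)$ using \eqref{pa}. The surviving terms are those with $s-r+1\le0$, namely $(x-x')^{r-s-1}/(r-s-1)!$.

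This leaves a binomial summation over $l$ and $r$. It should collapse, via the Taylor resummation $\sum_l b^{(l)}(x')\,\cdot$ and a Vandermonde-type identity, to $b(x')(x-x')^{-s-1}/(-s-1)!$ when $s=i+j\le-1$, and to $0$ when $s\ge0$. That matches the right-hand side. I expect this combinatorial collapse to be the main obstacle. For $s\ge0$, $\pa^s$ is a differential operator, and the result should be $\pa^s$ applied to $b(x)\cdot 0$, which is $0$. For $s<0$, the cleanest approach is to verify the claim first for $b$ a polynomial in $x$, by induction on $-s$ using $\pa\circ\pa^{s}=\pa^{s+1}$ and integrating from $x'$. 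Analytic $b$ then follows by linearity and the Taylor expansion. Concretely, the defining property to use is that $\pa^{-1}$ acting on functions of $x$ here means $\int_{x'}^x$, which is consistent with \eqref{pa}. With that, $\pa^{s}(b(x))$ evaluated at $x=x'$ with the appropriate remainder gives exactly $b(x')(x-x')^{-s-1}/(-s-1)!$ plus terms that cancel against the Taylor expansion.
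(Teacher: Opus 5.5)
The setup is sound. Expanding the right-hand side, reducing by bilinearity to $A=a(x)\pa^i$, $B=b(x')\pa^j$, and normal-ordering $\pa^{s}\circ\frac{(x-x')^l}{l!}$ by Leibniz are all correct. So are keeping only the terms with $s-r+1\le 0$ and the vanishing for $s=i+j\ge 0$ (there $\binom{s}{r}=0$ for $r\ge s+1$).

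\textbf{Gap.} The proposal stops exactly at the step that carries the lemma. For $s\le -1$ the identity $(\pa^{s}\circ b\circ\pa)(\Delta^0)=b(x')\frac{(x-x')^{-s-1}}{(-s-1)!}$ is only asserted (``should collapse'', ``main obstacle''). It is in fact one line. Put $s=-n$ with $n\ge 1$, so that $\binom{-n}{r}=(-1)^r\frac{(n+r-1)!}{(n-1)!\,r!}$. In your double sum the coefficient of $b^{(l)}(x')$ is then
\begin{align*}
\sum_{r=0}^{l}\binom{-n}{r}\frac{(x-x')^{l-r}}{(l-r)!}\,\frac{(x-x')^{r+n-1}}{(r+n-1)!}
=\frac{(x-x')^{l+n-1}}{(n-1)!\,l!}\sum_{r=0}^{l}(-1)^r\binom{l}{r}.
\end{align*}
This vanishes for $l\ge 1$, leaving only $b(x')\frac{(x-x')^{n-1}}{(n-1)!}$; no Vandermonde identity is needed.

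It is even simpler to skip the expansion about $x'$. Normal ordering directly gives
$(\pa^{-n}b\pa)(\Delta^0)=\sum_{r\ge0}\binom{-n}{r}b^{(r)}(x)\frac{(x-x')^{r+n-1}}{(r+n-1)!}=\frac{(x-x')^{n-1}}{(n-1)!}\sum_{r\ge0}\frac{(x'-x)^r}{r!}b^{(r)}(x)$,
which is $\frac{(x-x')^{n-1}}{(n-1)!}\,b(x')$ by Taylor's formula. With this inserted, your main line is a complete proof. It is the standard direct argument behind the references the paper cites; the paper itself gives no proof of this lemma.

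\textbf{The fallback fails as stated and should be dropped.} The rule ``$\pa^{-1}$ means $\int_{x'}^{x}$'' agrees with \eqref{pa} only on normal-ordered operators. Applied step by step to $\pa^{s}\circ b\circ\pa$ acting on $1$, it returns $0$, since $\pa(1)=0$.

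Similarly, write $F_s=(\pa^{s}b\pa)(\Delta^0)$. An induction $F_{s-1}=\int_{x'}^{x}F_s\,dy$ started from $F_0=0$ would give $F_{-1}=0$, whereas in fact $F_{-1}=b(x')$. A correct version of that induction needs two things:
\begin{itemize}
\item the compatibility $(\pa\circ P)(\Delta^0)=\pa_x\bigl(P(\Delta^0)\bigr)$, which is true but must be checked;
\item the integration constants $F_{s-1}(x',x')$, read off from the $\pa^0$-coefficient of the normal-ordered operator. This is $b(x')$ for $s-1=-1$ (from $\pa^{-1}b\pa=b-\pa^{-1}b'$) and $0$ for $s-1\le -2$.
\end{itemize}
Your closing sentence about ``the appropriate remainder'' and ``terms that cancel'' does not supply either of these.
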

\begin{lemma}\label{lemf}
\cite{Wuyq2023} For any two functions \(f(x)\) and \(g(x)\), the product \(f(x)g(x')\) can be expressed as
\[
f(x)g(x') = \left(f(x)\partial_x^{-1}g(x)\partial_x\right)\big(\Delta^0\big).
\]
\end{lemma}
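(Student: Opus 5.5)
The statement to prove is Lemma \ref{lemf}: $f(x)g(x')=\big(f(x)\partial_x^{-1}g(x)\partial_x\big)(\Delta^0)$. I would reduce it to a one-variable integration identity by using the rules \eqref{pa}. The operator $f\partial^{-1}g\partial$ is applied to $\Delta^0$ from right to left. Since $f(x)$ is a multiplication operator on the far left, it suffices to show
\[
\big(\partial_x^{-1} g(x)\partial_x\big)(\Delta^0)=g(x').
\]

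The first approach is to compute in the operator sense. Use the pseudo-differential identity $\partial^{-1}g\partial=g-\partial^{-1}g_x$, which follows from $\partial g=g\partial+g_x$ multiplied by $\partial^{-1}$ on both sides. Expanding $\partial^{-1}g_x=\sum_{j\ge0}(-1)^j g^{(j+1)}\partial^{-1-j}$ gives
\[
\partial^{-1}g\partial=g+\sum_{j\ge0}(-1)^{j+1}g^{(j+1)}(x)\,\partial^{-1-j}.
\]
By \eqref{pa}, $\partial^{-1-j}(\Delta^0)=(x-x')^{j+1}/(j+1)!$, and $\partial^0(\Delta^0)=1$. Hence
\[
\big(\partial^{-1}g\partial\big)(\Delta^0)=\sum_{l\ge0}\frac{g^{(l)}(x)}{l!}(x'-x)^l .
\]
This is the Taylor expansion of $g(x')$ about $x$. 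Multiplying by $f(x)$ gives the claim.

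An alternative route is to apply Lemma \ref{A} with $A=f\partial^{-1}$ and $B=g$, so $B^*=g$. Then the right-hand side equals ${\rm Res}_\la f(x)\partial_x^{-1}(e^{x\la})\,g(x')e^{-x'\la}={\rm Res}_\la \la^{-1}f(x)g(x')e^{(x-x')\la}$. With the formal convention that only the $\la^{-1}$ coefficient is kept, this gives $f(x)g(x')$.

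The main obstacle is justifying the formal manipulations: the interpretation of $\partial^{-a}$ on $\Delta^0$ via \eqref{pa}, and treating the series as a formal Taylor expansion, meaning analytic or formal power series in $x-x'$. I would state that all identities hold as formal series in $(x-x')$, which is the convention underlying Lemma \ref{A}, so that the Taylor resummation is legitimate.
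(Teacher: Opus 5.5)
Your proof is correct. The paper gives no proof of its own here; it only cites \cite{Wuyq2023}.

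Your second route is the one-line derivation most in keeping with how the paper pairs this lemma with Lemma~\ref{A} in the proof of Theorem~\ref{k,m}. You take Lemma~\ref{A} with $A=f\partial^{-1}$ and $B=B^*=g$, and use ${\rm Res}_\la \la^{-1}e^{(x-x')\la}=1$. This route, however, inherits whatever formal conventions underlie Lemma~\ref{A}.

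Your first route is self-contained. You normal-order $\partial^{-1}g\partial=g+\sum_{j\ge0}(-1)^{j+1}g^{(j+1)}\partial^{-1-j}$ and then apply \eqref{pa} termwise. This shows explicitly that the identity holds as a formal Taylor series of $g(x')$ about $x$. That is also the only sense in which Lemma~\ref{A} can hold, since its left side is a series in $x-x'$ whose coefficients depend on $x$ alone.

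One wording point should be fixed. Do not say the operator is applied to $\Delta^0$ ``from right to left.'' Taken literally, $\partial_x(\Delta^0)=0$ by \eqref{pa}, and you would get $0$ instead of $f(x)g(x')$. The action is defined by first writing the operator in normal form, with all coefficients to the left of the powers of $\partial$, and then using \eqref{pa} termwise. Your computation does exactly this. Pulling $f(x)$ out on the left is legitimate because left multiplication commutes with normal ordering.
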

First, by differentiating both sides of \eqref{theorem:bili2} with respect to $x'$, we obtain \begin{align*}
		{\rm Res}_\lambda \psi(t,\lambda)\psi(t',-\lambda)=0,
	\end{align*}
which is exactly the CKP bilinear equation. Note that $\psi(t,\la)=(1+\sum_{i=1}^{\infty}w_i\la^{-i})e^{\xi(t,\la)}$, and if we set $W=1+\sum_{i=1}^{\infty}w_i\pa^{-i}$, then $\psi(t,\la)=W(e^{\xi(t,\la)})$, and the following relations hold (please refer to \cite{dickey2003,Zabrodin2021})
\begin{align}
W_{t_n}=-(L^n)_{<0}W, \quad W^*=W^{-1}.\label{WL}
\end{align}
If set $L=W\pa W^{-1}$, then \eqref{LAX} holds. And by \eqref{WL}, we can get
\begin{align}
L(\psi)=\la \psi,\quad \pa_{t_n}\psi=(L^n)_{\geq0}(\psi).
\end{align}
Thus by $\pa_{t_n}\psi=(L^n)_{\geq0}(\psi)$ and \eqref{theorem:bili2}, we can know  $\pa_{t_n}q_{a,i}=(L^n)_{\geq0}(q_{a,i})$ holds. Therefore we only need to prove \eqref{Lax op}.

If we set $t'=t$ in \eqref{theorem:bili1} except $x'\neq x$, we can get by using Lemma \ref{A} and Lemma \ref{lemf} that
	\begin{align*}
		(W(t)\partial^kW(t)^*\pa)(\Delta^0)=\sum_{i=1}^m\left(q_{1i}(t)\pa^{-1}q_{2i}(t)\pa-(-1)^kq_{2i}(t)\pa^{-1}q_{1i}(t)\pa\right)(\Delta^0).
	\end{align*}
	Therefore, by \eqref{pa}
	\begin{align*}
		(W\partial^kW^*\pa)_{\leq0}=\sum_{i=1}^m\left(q_{1i}\pa^{-1}q_{2i}\pa-(-1)^kq_{2i}\pa^{-1}q_{1i}\pa\right).
	\end{align*}
So by $W^*=W^{-1}$, we can know
\begin{align*}
(W\partial^kW^{-1})_{<0}=\sum_{i=1}^m \left(q_{1i}\pa^{-1}q_{2i}-(-1)^kq_{2i}\pa^{-1}q_{1i}\right),
\end{align*}
	that is
\begin{align*}
(L^k)_{<0}=\sum_{i=1}^m \left(q_{1i}\pa^{-1}q_{2i}-(-1)^kq_{2i}\pa^{-1}q_{1i}\right),
\end{align*}
which is just \eqref{Lax op}.
Furthermore, $\Omega(q,\psi)$ is given by (refer to \cite{cheng2014,WangS-Non-2025})
\begin{align}
\Omega(q(t),\psi(t,\la))=\frac{1}{2\la\varphi(t,\la)}(q(t-2[\la^{-1}])+q(t))\psi(t,\la).
\end{align}
Therefore, if set $\rho_{a,i}=\tau q_{a,i}$, then we can obtain \eqref{res1}\eqref{res2}.
\section{The Darboux transformations of the constrained CKP hierarchy}
In this section, we firstly investigate the CKP Darboux transformation by the results of the KP Darboux transformation. And then based upon this, we give the solutions of the $(k,m)$-constrained CKP hierarchy, that is Theroem 2.

\subsection{The CKP Darboux transformations}
	We firstly review some basic facts about the CKP Darboux transformations.
A pseudo-differential operator $T$ is called the CKP Darboux operator \cite{he2007jmp,WangS-Non-2025}, if for a CKP Lax operator $L$, the operator
\begin{align}
L^{\{1\}}=TLT^{-1},
\end{align}
is still a CKP Lax operator. It is proved that the CKP Darboux operator $T$ satisfying
 \begin{align}
 T^{*-1}=T,
 \end{align}
 and the fundamental CKP Darboux operator is given by \cite{he2007jmp,WangS-Non-2025}
\begin{align}
 T[q]=1-\Omega(q,q)^{-1}q\partial^{-1}q,
\end{align}
   where $q$ is the CKP eigenfunction satisfying $q_{t_n}=(L^n)_{\geq0}(q)$ with CKP Lax operator $L$. And $T[q]$ can be constructed by the KP Darboux operators \cite{oevel1993pa,he2002,yang2022} $T_D[f]=f\partial f^{-1}, T_I[g]=g^{-1}\partial^{-1}g$ in the way below
\begin{align}
T[q]=T_D[q^{[1]}]T_I[q],
\end{align}
where $q^{[1]}=q^{-1}\Omega(q,q)$, $f$ is the KP eigenfunction and $g$ is the KP adjoint eigenfunction satisfying
\begin{align*}
f_{t_p}=(\mathcal{L}^p)_{\geq0}(f),\quad
g_{t_p}=-(\mathcal{L}^p)^*_{\geq0}(g),
\end{align*}
for the KP Lax operator $\mathcal{L}=\pa+\sum_{i=1}^{\infty}v_{i+1}\pa^{-1}$.

Under the action of the CKP Darboux operator $T[q]$, the CKP wave function $\psi(t,\la)$, and the CKP tau function $\tau(t)$ will be transformed in the way below \cite{he2007jmp,WangS-Non-2025}
\begin{align}
&\psi(t,\la)\to \psi^{\{1\}}(t,\la)=T[q](\psi(t,\la)),\\
&\tau(t)\to\tau^{\{1\}}(t)= \Omega (q(t),q(t))^{1/2}\tau(t),
\end{align}
Here $\psi^{\{1\}}$ and $\tau^{\{1\}}$ will still satisfy the CKP bilinear equation \eqref{bili} with \eqref{bili*} (please refer to \cite[Theorem 1]{WangS-Non-2025}).

Notice that the fundamental KP Darboux transformations \( T_d \) and \( T_i \) can commute with each other in the way below  \cite{oevel1993pa}
\begin{align*}
T_I[g^{[1]}]T_D[f] = T_D[f^{[1]}]T_I[g],\quad
T_D[f_1^{[1]}]T_D[f_2] = T_D[f_2^{[1]}]T_D[f_1],\quad
T_I[g_1^{[1]}]T_I[g_2] = T_I[g_2^{[1]}]T_I[g_1],
\end{align*}
where  $f_i^{[1]}=T_D[f_{3-i}](f_i)$, $g_i^{[1]}=(T_D[g_{3-i}]^*)^{-1}(g_i)$, $g^{[1]}=(T_D^{-1}[f])^*(g)$, $f^{[1]}=T_D[g](f)$. So we can only consider the following multi-step KP Darboux transformations

\[
\mathcal{L} \xrightarrow{T_D[f_1]} \mathcal{L}^{[1]} \xrightarrow{T_D[f_2^{[1]}]} \mathcal{L}^{[2]} \to \dots \to \mathcal{L}^{[n-1]} \xrightarrow{T_D[f_n^{[n-1]}]} \mathcal{L}^{[n]}
\]
\[
 \xrightarrow{T_I[g_1^{[n]}]} \mathcal{L}^{[n+1]} \xrightarrow{T_I[g_2^{[n+1]}]} \dots \to \mathcal{L}^{[n+k-1]} \xrightarrow{T_I[g_k^{[n+k-1]}]} \mathcal{L}^{[n+k]},
\]
where $\mathcal{L} $ is the KP Lax operator, $f_i \ (i=1,\dots,M)$ are the KP independent eigenfunctions and $g_j \ (j=1,\dots,N)$ are the KP adjoint eigenfunctions. Here we denote
\begin{align*}
	&L^{[i]} = T^{[i,0]} L T^{[i,0]-1},  \quad
	f_i^{[i-1]} = T^{[i,0]} (f_i),\quad 1 \leq i \leq M \\
	&L^{[M+j]} = T^{[M,j]} L T^{[M,j]-1}, \quad
	g_j^{[M+j-1]} = T^{[M,j]-1*} (g_j),\quad 1 \leq j \leq N,
\end{align*}
and
\begin{align*}
T^{[\overrightarrow{M},\overrightarrow{N}]}&=T^{[\overrightarrow{M},\overrightarrow{N}]}(f_1,\cdots,f_M;g_1,\cdots,g_N)\\
&= T_i[g_N^{[M+N-1]}] \dots T_i[g_1^{[M]}] T_d[f_M^{[M-1]}] \dots T_d[f_1],
\end{align*}
where $\overrightarrow{M}=(M,M-1,\cdots,2,1)$, $\overrightarrow{N}=(N,N-1,\cdots,2,1)$. Under the action of $T^{[\overrightarrow{M},\overrightarrow{N}]}$, the KP eigenfunction $\Phi$, the KP adjoint eigenfunction $\Psi$ and the KP tau function $\tau_{\rm KP}$ will be transformed in the way below \cite{he2002,yang2022}.
\begin{itemize}
\item When  $M > N$,
\begin{align*}
&\Phi^{[M+N]} = \frac{IW_{N,M+1}\bigl(g_N,\dots,g_1;\,f_1,\dots,f_M,\Phi\bigr)}{IW_{N,M}\bigl(g_N,\dots,g_1;\,f_1,\dots,f_M\bigr)} , \\
&\Psi^{[M+N]} = (-1)^M \frac{IW_{N+1,M}\bigl(\Psi,g_N,\dots,g_1;\,f_1,\dots,f_M\bigr)}{IW_{N,M}\bigl(g_N,\dots,g_1;\,f_1,\dots,f_M\bigr)} , \\
&\tau_{\rm KP}^{[M+N]} = (-1)^{MN} IW_{N,M}\bigl(g_N,\dots,g_1;\,f_1,\dots,f_M\bigr) \tau_{\rm KP}.
\end{align*}
\item When  $M = N $,
\begin{align}\label{2M=}
&\Phi^{[2M]} = \frac{IW_{M,M+1}\bigl(g_M,\dots,g_1;\,f_1,\dots,f_M,\Phi\bigr)}{IW_{M,M}\bigl(g_M,\dots,g_1;\,f_1,\dots,f_M\bigr)} ,\\
&\Psi^{[2M]} = \frac{(-1)^M \, IW_{M,M+1}\bigl(f_1,\dots,f_M;\,\Psi,g_M,\dots,g_1\bigr)}{IW_{M,M}\bigl(g_M,\dots,g_1;\,f_1,\dots,f_M\bigr)},\nonumber\\
&\tau_{\rm KP}^{[2M]} = (-1)^{M} \, IW_{M,M}\bigl(g_M,\dots,g_1;\,f_1,\dots,f_M\bigr)\tau_{\rm KP}.\nonumber
\end{align}
\item When  $M < N$,
\begin{align}\label{2M<}
&\Phi^{[M+N]} = \frac{(-1)^M \, IW_{M+1,N}\bigl(\Phi,f_M,\dots,f_1;\,g_1,\dots,g_N\bigr)}{IW_{M,N}\bigl(f_M,\dots,f_1;\,g_1,\dots,g_N\bigr)},\\
&\Psi^{[M+N]} = \frac{(-1)^{M+N} \, IW_{M,N+1}\bigl(f_M,\dots,f_1;\,g_1,\dots,g_N,\Psi\bigr)}{IW_{M,N}\bigl(f_M,\dots,f_1;\,g_1,\dots,g_N\bigr)},\nonumber\\
&\tau_{\rm KP}^{[N+M]} = (-1)^{MN+\frac{M(M-1)}{2}+\frac{N(N-1)}{2}} \, IW_{M,N}\bigl(f_1,\dots,f_M;\,g_N,\dots,g_1\bigr)\tau_{\rm KP}\nonumber.
\end{align}
\end{itemize}

By direct computations, we can have the following lemma
\begin{lemma}\label{lem:T}
For the KP Darboux operators $T_D$ and $T_I$, the KP eigenfunctions $f,f_1,f_2$ and the KP adjoint eigenfunctions $g,g_1,g_2$
	\begin{align}	
    T_D[f^{[1]}](g^{-1})&=(T_D^{-1}[f]^*(g))^{-1},\quad\quad
    f^{[1]}=g^{-1}\Omega(f,g) ,\label{td-1}\\
		(T_I^{-1}[g^{[1]}])^*(f^{-1})&=(T_I[g](f))^{-1},\quad\quad\ \ \ g^{[1]}=-f^{-1}\Omega(f,g),\label{ti-1}\\
		(T_D^{-1}[f_2^{[1]}])^*(f_1^{-1})&=(T_D[f_2](f_1))^{-1},\quad \quad
f_2^{[1]}=f_1 (f_1^{-1}f_2)_x,\label{td-2}\\
		T_I[g_2^{[1]}](g_1^{-1})&=(T_I^{-1}[g_2]^*(g_1))^{-1},\quad\  g_2^{[1]}=-g_1 (g_1^{-1}g_2)_x.\label{ti-2}
	\end{align}
\end{lemma}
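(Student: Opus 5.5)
The four identities are pointwise statements about functions. I will verify each by direct computation, using the explicit forms $T_D[f]=f\pa f^{-1}$ and $T_I[g]=g^{-1}\pa^{-1}g$, together with the action rules
\[
T_D[f](h)=f(h/f)_x, \qquad T_I[g](h)=g^{-1}\Omega(h,g).
\]
For the adjoint inverses I will use
\[
T_D^{-1}[f]^*=(f\pa^{-1}f^{-1})^*=-f^{-1}\pa^{-1}f, \qquad T_I^{-1}[g]^*=(g^{-1}\pa g)^*=-g\pa g^{-1}.
\]
The defining formulas for $f^{[1]},g^{[1]},f_2^{[1]},g_2^{[1]}$ on the right-hand sides follow from these rules.

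First I read off the transformed functions. Taking $f^{[1]}=T_I[g](f)$ gives $g^{-1}\Omega(f,g)$, which is (\ref{td-1}). Taking $g^{[1]}=(T_D^{-1}[f])^*(g)$ gives $-f^{-1}\Omega(f,g)$, which is (\ref{ti-1}). The function $f_2^{[1]}=T_D[f_1](f_2)$ equals $f_1(f_2/f_1)_x$, which is (\ref{td-2}). The function $g_2^{[1]}=(T_D[g_1]^{*})^{-1}(g_2)$ is the adjoint action of $T_I$ and gives $-g_1(g_2/g_1)_x$, which is (\ref{ti-2}). The sign conventions follow the paper's $g_i^{[1]}$ notation and the remark defining $\Omega$ as the $x$-antiderivative with no integration constant. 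Then $(\Omega(f,g))_x=fg$.

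Next I verify the left identities.

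For (\ref{td-1}), write $\Omega=\Omega(f,g)$. The right side is $(-f^{-1}\Omega)^{-1}=-f/\Omega$. The left side is $T_D[g^{-1}\Omega](g^{-1})=g^{-1}\Omega\,\pa(\Omega^{-1})=-g^{-1}\Omega\,\Omega_x/\Omega^2=-f/\Omega$. The two agree.

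For (\ref{ti-1}), use $g^{[1]}=-f^{-1}\Omega$. Then $(T_I^{-1}[g^{[1]}])^*(f^{-1})=-g^{[1]}\pa(f^{-1}/g^{[1]})$. Since $f^{-1}/g^{[1]}=-\Omega^{-1}$, this equals $-f^{-1}\Omega\cdot \Omega_x\Omega^{-2}=-g/\Omega$. On the other side, $(T_I[g](f))^{-1}=g/\Omega$.

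This exposes a sign discrepancy, so the sign conventions are the main obstacle. The paper's conventions for $\Omega(f,g)$ versus $\Omega(g,f)$, and for $g^{[1]}$, must be tracked consistently. I would first fix them using the $T_D$–$T_I$ commutation relation quoted just before the lemma. Concretely, I would require that $T_I[g^{[1]}]T_D[f]=T_D[f^{[1]}]T_I[g]$ hold as operators by comparing the coefficients of $\pa^{0}$ and $\pa^{-1}$. That determines the integration-constant and sign choice for $g^{[1]}$, possibly $g^{[1]}=f^{-1}\Omega$ up to convention. I would then redo the check above with that choice, adjusting signs (note that $(-h)^{-1}$ flips sign) so that both sides match.

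For (\ref{td-2}) I set $u=f_2/f_1$, so $f_2^{[1]}=f_1u_x$. The left side is $(T_D^{-1}[f_1u_x])^*(f_1^{-1})=-(f_1u_x)^{-1}\pa^{-1}(f_1u_xf_1^{-1})=-(f_1u_x)^{-1}u$. The right side is $(f_1u_x)^{-1}$. Again they differ only by sign and by the constant of integration of $\pa^{-1}(u_x)$. I would resolve this in the same way, by fixing the normalization so that the two Darboux orderings give the same operator.

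Finally, (\ref{ti-2}) is the formal adjoint of (\ref{td-2}) under the substitutions $f\leftrightarrow g$ and $T_D\leftrightarrow (T_I^{-1})^*$, so it follows with the same bookkeeping.
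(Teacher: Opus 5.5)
Your plan (direct computation from $T_D[f]=f\pa f^{-1}$ and $T_I[g]=g^{-1}\pa^{-1}g$) is the same as the paper's "direct computations". Your formulas for $f^{[1]},g^{[1]},f_2^{[1]},g_2^{[1]}$ and your check of \eqref{td-1} are right. The gap is that the two ``sign discrepancies'' you report are arithmetic slips, not convention issues. As written, the proposal asserts that \eqref{ti-1} and \eqref{td-2} fail rather than proving them, and it leaves \eqref{ti-2} to ``the same bookkeeping''.

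In \eqref{ti-1}, note that $-g^{[1]}=+f^{-1}\Omega$ and $\pa(-\Omega^{-1})=+\Omega_x\Omega^{-2}$. Hence $(T_I^{-1}[g^{[1]}])^*(f^{-1})=f^{-1}\Omega\cdot fg\,\Omega^{-2}=g/\Omega=(T_I[g](f))^{-1}$.

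In \eqref{td-2}, you evaluated the right-hand side as $(f_1u_x)^{-1}$, but that is $(T_D[f_1](f_2))^{-1}=(f_2^{[1]})^{-1}$. The statement involves $T_D[f_2](f_1)=f_2(u^{-1})_x=-f_1u_x/u$. So the right-hand side is $-u/(f_1u_x)$, which is exactly your left-hand side $-(f_1u_x)^{-1}\pa^{-1}(u_x)$ once $\pa^{-1}(u_x)=u$.

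The same computation settles \eqref{ti-2}. With $v=g_2/g_1$, the left side is $(-g_1v_x)^{-1}\pa^{-1}(-v_x)=v/(g_1v_x)$. On the right, $T_I^{-1}[g_2]^*(g_1)=-g_2(v^{-1})_x=g_1v_x/v$, whose inverse is also $v/(g_1v_x)$.

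The remedy you propose, adjusting the sign of $g^{[1]}$ via the commutation relations, could not have worked anyway. For any nonzero constant $c$ we have $T_D[cf]=T_D[f]$ and $T_I[cg]=T_I[g]$. So replacing $g^{[1]}$ by $f^{-1}\Omega$ leaves the left-hand side of \eqref{ti-1} unchanged. Moreover, $g^{[1]}=-f^{-1}\Omega(f,g)$ is forced by $(T_D^{-1}[f])^*=-f^{-1}\pa^{-1}f$, as you yourself derived.

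The only normalization that genuinely matters is the integration constant of $\pa^{-1}$ in \eqref{td-2} and \eqref{ti-2]}. It must be zero, i.e.\ $\pa^{-1}(u_x)=u$, consistent with $\pa^{-1}\pa=1$; a nonzero constant $c$ would turn the left side of \eqref{td-2} into $-(f_2+cf_1)/(f_1f_{2,x}-f_{1,x}f_2)$. In \eqref{td-1} and \eqref{ti-1}, any additive constant in $\Omega$ is harmless, provided the same $\Omega$ appears on both sides. With these two computations corrected and the convention-fixing detour dropped, your direct verification proves the lemma as the paper intends.
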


If denote for the CKP eigenfunctions $f$ and $f_1,\cdots,f_M$,
\begin{align}\label{tm}
&T^{\{\overrightarrow{M}\}}(f_1,f_2,\cdots,f_M)\triangleq T^{[\overrightarrow{M},\overrightarrow{M}]}(f_1,\cdots,f_M;f_1,\cdots,f_M),\nonumber\\
&L^{\{M\}}=T^{\{\overrightarrow{M}\}}LT^{\{\overrightarrow{M}\}-1},\quad f^{\{M\}}=T^{\{\overrightarrow{M}\}}(f),
\end{align}
$T^{\{M\}}$ means $T^{\{M\}}(f_1,\cdots,f_M)$, $L^{\{M\}}$ is the new CKP Lax operator, and $f^{\{M\}}$ is the new CKP eigenfunction. Then under the action of the CKP Darboux operator $T^{\{\overrightarrow{M}\}}$, the CKP wave function $\psi(t,\la)$, and the CKP tau function $\tau(t)$ will be transformed in the way below (please refer to \cite [Corollary 6]{WangS-Non-2025})
\begin{align}
&\psi(t,\la)\to \psi^{\{M\}}(t,\la)=T^{\{\overrightarrow{M}\}}(\psi(t,\la)),\\
&\tau(t)\to\tau^{\{M\}}(t)=\sqrt{IW_{M,M}(f_1,\cdots,f_M;f_1,\cdots,f_M)}\tau(t).\label{mm}
\end{align}
Here $\psi^{\{M\}}$ still satisfies the CKP bilinear equation \eqref{bili} and is related with $\tau^{\{M\}}$ by \eqref{bili*}.
\subsection{The constrained CKP hierarchy under the CKP Darboux transformation} 
In this section, we will discuss the $(k,m)$-constrained CKP hierarchy under the action of $T^{\{\overrightarrow{M}\}}$. For this, let us see the lemma below.

\begin{lemma}\label{lemma:TLKT}
For the Lax operator L of the $(k,m)$-constrained CKP hierarchy:
\begin{align*}
(T[f]L^kT[f]^{-1})_{<0}
&=\left(T[f]L^k\right)(f)\cdot\pa^{-1}\cdot(T_I[f](f))^{-1}
-(-1)^k(T_I[f](f))^{-1}\cdot\pa^{-1}\cdot T[f]L^k(f)\\
&+\sum_{i=1}^m\big(T[f](q_{1i})\cdot\pa^{-1}\cdot T[f](q_{2i})-(-1)^kT[f](q_{2i})\cdot\pa^{-1}\cdot T[f](q_{1i}\big),
\end{align*}
where $f$ is the CKP eigenfunction.
\end{lemma}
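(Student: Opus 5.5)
Write $T=T[f]=1-\Omega(f,f)^{-1}f\pa^{-1}f$, so that $T=T_D[f^{[1]}]T_I[f]$ with $f^{[1]}=f^{-1}\Omega(f,f)$, and use the CKP property $T^{-1}=T^*$. Split $L^k=B+A_{k,m}$ with $B=(L^k)_{\geq0}$ and $A_{k,m}=(L^k)_{<0}$ given by \eqref{Lax op2}. Then
\[
(TL^kT^{-1})_{<0}=(TBT^{-1})_{<0}+(TA_{k,m}T^{-1})_{<0}.
\]
The plan is to handle the two pieces separately with the Oevel formulas $(A_{\geq0}h\pa^{-1})_{<0}=A_{\geq0}(h)\pa^{-1}$ and $(\pa^{-1}hA_{\geq0})_{<0}=\pa^{-1}A^*_{\geq0}(h)$ quoted before \eqref{AKM}, after which the boundary terms should recombine into the two displayed terms involving $T_I[f](f)$.

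\emph{Step 1 (the shape of $T$ and $T^{-1}$).} From $T=T_D[f^{[1]}]T_I[f]=f^{[1]}\pa (f^{[1]})^{-1}f^{-1}\pa^{-1}f$, write $T=1+\alpha\pa^{-1}\beta$ with $\alpha=-f\,\Omega(f,f)^{-1}$ and $\beta=f$. Since $T^{-1}=T^*=1-f\pa^{-1}\Omega(f,f)^{-1}f$, we have $T^{-1}=1+\gamma\pa^{-1}\delta$ for explicit $\gamma,\delta$. One checks that $T_I[f](f)=f^{-1}\pa^{-1}(f^2)=f^{-1}\Omega(f,f)$, so $(T_I[f](f))^{-1}=f\,\Omega(f,f)^{-1}$. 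This is precisely the function annihilated by $T^{*}$, i.e.\ the kernel of $T^{-1*}$ in the adjoint sense. Likewise $T(f)=f-\Omega(f,f)^{-1}f\,\Omega(f,f)=0$, so $f\in\ker T$.

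\emph{Step 2 (the differential part).} Expand $TBT^{-1}=(1+\alpha\pa^{-1}\beta)B(1+\gamma\pa^{-1}\delta)$. This is the standard Darboux computation for the KP case: for $T$ with one-dimensional kernel spanned by $f$, one has
\[
(TBT^{-1})_{<0}=T(B(f))\cdot\pa^{-1}\cdot \tilde g-(\text{adjoint term}),
\]
where $\tilde g$ spans the kernel of $T^{*-1}$, up to normalization fixed by $T^{-1}$'s residue. Concretely, I would use the known identity $TBT^{-1}=(TBT^{-1})_{\geq 0}+T(B(f))\pa^{-1}(T^{-1})^*$-kernel-function terms. It is cleanest to derive this via $T=T_D[f^{[1]}]T_I[f]$ using Lemma \ref{lem:T}: under $T_I[f]$ the negative part picks up $-\,(T_I[f](f))^{-1}\pa^{-1}(\cdots)$, and under $T_D$ it picks up $T(\cdot)\pa^{-1}(\cdots)$, with \eqref{td-1},\eqref{ti-1} converting the intermediate eigenfunctions into $T(\cdot)$ and $(T_I[f](f))^{-1}$. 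The CKP symmetry $(TL^kT^{-1})^*=(-1)^kTL^kT^{-1}$ (from $L^*=-L$ and $T^{-1}=T^*$) forces the two boundary terms to appear in the antisymmetric/symmetric pair, which produces the factor $-(-1)^k$.

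\emph{Step 3 (the integral part).} For $TA_{k,m}T^{-1}$, each term $q\pa^{-1}r$ transforms as $(Tq\pa^{-1}rT^{-1})_{<0}=T(q)\pa^{-1}T^{-1*}(r)+(\text{correction})$, where $T^{-1*}=T$ and the correction is again proportional to $\pa^{-1}(T_I[f](f))^{-1}$ or $(T_I[f](f))^{-1}\pa^{-1}$ with coefficient $T(q)\Omega(r,f)$-type terms. Summing Steps 2 and 3, the corrections combine into $T(B(f)+A_{k,m}(f))=T L^k(f)$, which gives the displayed formula.

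The main obstacle is the bookkeeping in Steps 2--3: showing that all boundary terms collect exactly into $(TL^k)(f)\pa^{-1}(T_I[f](f))^{-1}$ and its $(-1)^k$-adjoint partner, with correct integration constants for $\Omega$ (fixed as in Remark \ref{remark:omega1}). My approach is to compute one side, the $T(\cdot)\pa^{-1}$ terms, directly, and then obtain the other side for free by taking adjoints and using the CKP symmetry.
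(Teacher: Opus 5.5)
Your skeleton is the paper's: factor $T[f]=T_D[f^{[1]}]T_I[f]$, push the constrained form through each KP step (your split into $(L^k)_{\geq0}$ and $A_{k,m}$ is the bookkeeping behind the Chau--Shaw--Tu formula the paper cites), and finish with $T[f]^{*-1}=T[f]$, $L^*=-L$ and \eqref{td-1}. The genuine gap is your plan to get the second boundary term ``for free'' from the symmetry $X^*=(-1)^kX$, where $X=(T[f]L^kT[f]^{-1})_{<0}$. For odd $k$ this does not determine that term. Put $h=(T_I[f](f))^{-1}=f\,\Omega(f,f)^{-1}$; then $(h\pa^{-1}h)^*=-h\pa^{-1}h$. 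So even granting that the missing term has the form $h\pa^{-1}S$, the symmetry only yields $S=T[f]L^k(f)+c\,h$ with $c$ an arbitrary constant. Moreover, ``the $T(\cdot)\pa^{-1}$ terms'' is not canonical, since sums $\sum a_j\pa^{-1}b_j$ can be rearranged.

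The paper obtains this term by computation. The $T_I[f]$ step produces $f^{-1}\pa^{-1}\big(T_I[f]^{-1*}L^{*k}\big)(f)$, not $-(T_I[f](f))^{-1}\pa^{-1}(\cdots)$ as you state. The $T_D[f^{[1]}]$ step then sends this to $T_D[f^{[1]}](f^{-1})\,\pa^{-1}\,T[f]^{-1*}L^{*k}(f)$. Finally, $T_D[f^{[1]}](f^{-1})=-f\,\Omega(f,f)^{-1}=-(T_I[f](f))^{-1}$ by \eqref{td-1}, and $T[f]^{-1*}L^{*k}=(-1)^kT[f]L^k$; together these give exactly $-(-1)^k(T_I[f](f))^{-1}\pa^{-1}T[f]L^k(f)$. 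Once you carry out this two-step computation, which you flag as the main obstacle but leave undone, the symmetry is only a consistency check, and \eqref{ti-1} is not needed.

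There are also two smaller errors. First, $T[f]^{-1}=T[f]^*=1+f\pa^{-1}\Omega(f,f)^{-1}f$; you have the wrong sign, and with your sign $T^*$ would send $f\,\Omega(f,f)^{-1}$ to $2f\,\Omega(f,f)^{-1}$ rather than to $0$. Second, $f\,\Omega(f,f)^{-1}$ spans the kernel of $T^*=T^{-1}$, not of $T^{*-1}=T$, whose kernel is spanned by $f$.
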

\begin{proof}
Since $T[f]=T_D[f^{[1]}]T_I[f]$, let us first compute $\left(T_I[f]L^kT_I^{-1}[f]\right)_{<0}$ and then we can find (please refer to \cite{ChauShawTu1997})
\begin{align*}
&\left(T_I[f]L^kT_I^{-1}[f]\right)_{<0}
=f^{-1}\pa^{-1}\left(T_I[f]^{-1*}L^{*k}\right)(f)\\
+&\sum_{i=1}^m\left(T_I[f](q_{1i})\cdot\pa^{-1}\cdot T_I[f]^{-1*}(q_{2i})-(-1)^kT_I[f](q_{2i})\cdot\pa^{-1}\cdot T_I[f]^{-1*}(q_{1i})\right).
\end{align*}
Next we can get
\begin{align*}
\left(T_D[f^{[1]}]T_I[f]L^k\left(T_D[f^{[1]}]T_I[f]\right)^{-1}\right)_{<0}
=T[f]L^k(f)\pa^{-1}f^{[1]-1}
+T_D[f^{[1]}](f^{-1})\pa^{-1}T[f]^{-1*}L^{*k}(f)\\
+\sum_{i=1}^m\left(T[f](q_{1i})\pa^{-1}T[f]^{-1*}(q_{2i})-(-1)^kT[f](q_{2i})\pa^{-1}T[f]^{-1*}(q_{1i})\right).
\end{align*}
Owing to $T[f]^{*-1}=T[f]$ (the property of the CKP Darboux operator) and $L^*=-L$, together with \eqref{td-1}, we can get
\begin{align*}
(T[f]L^kT[f]^{-1})_{<0}
&=\left(T[f]L^k\right)(f)\cdot\pa^{-1}\cdot(T_I[f](f))^{-1}
-(-1)^k(T_I[f](f))^{-1}\cdot\pa^{-1}\cdot T[f]L^k(f)\\
&+\sum_{i=1}^m\left(T[f](q_{1i})\cdot\pa^{-1}\cdot T[f](q_{2i})-(-1)^kT[f](q_{2i})\cdot\pa^{-1}\cdot T[f](q_{1i}\right).
\end{align*}
\end{proof}
The more general case can be found in the proposition below.
\begin{proposition}\label{Lax_(k,m)}
For the Lax operator L of the $(k,m)$-constrained CKP hierarchy,
\begin{align*}
\bigl(T^{\{\overrightarrow{M}\}}L^kT^{\{\overrightarrow{M}\}-1} \bigr)_{<0}
&=\sum_{j=1}^{M}\bigl(T^{\{\overrightarrow{M}\}}L^k\bigr)(f_j)\cdot \partial^{-1}\cdot\bigl(T^{[\overrightarrow{M}\setminus\{j\},\overrightarrow{M}]}(f_j)\bigr)^{-1}\\
&-(-1)^k\sum_{j=1}^M\bigl(T^{[\overrightarrow{M}\setminus\{j\},\overrightarrow{M}]}(f_j)\bigr)^{-1}\cdot\partial^{-1}\cdot\bigl(T^{\{\overrightarrow{M}\}}L^k\bigr)(f_j)\\
&+\sum_{i=1}^{m}\big(T^{\{\overrightarrow{M}\}}(q_{1i})\cdot\partial^{-1}\cdot T^{\{\overrightarrow{M}\}}(q_{2i})-(-1)^kT^{\{\overrightarrow{M}\}}(q_{2i})\cdot\partial^{-1}\cdot T^{\{\overrightarrow{M}\}}(q_{1i})\big),
\end{align*}
where $\overrightarrow{M}\setminus\{j\}=(M,M-1,\cdots,j+1,j-1,\cdots,2,1)$.
\end{proposition}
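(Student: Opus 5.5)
I will prove Proposition \ref{Lax_(k,m)} by induction on $M$, with Lemma \ref{lemma:TLKT} as the case $M=1$. The key structural fact is that the CKP Darboux operator $T^{\{\overrightarrow{M}\}}$ factors as a product of fundamental CKP Darboux operators:
\begin{align*}
T^{\{\overrightarrow{M}\}}=T[f_M^{\{M-1\}}]\,T^{\{\overrightarrow{M-1}\}},\qquad f_M^{\{M-1\}}=T^{\{\overrightarrow{M-1}\}}(f_M).
\end{align*}
This follows from the commutation rules for $T_D$ and $T_I$ recalled above. Using them, I will move $T_I[g_M]$ past the $T_D$'s and pair it with $T_D[f_M]$. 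The rules $g^{[1]}=(T_D^{-1}[f])^*(g)$ together with $T^{*-1}=T$ for CKP Darboux operators ensure that the transformed adjoint eigenfunction coincides with $T^{\{\overrightarrow{M-1}\}}(f_M)$. Hence the last pair is exactly $T[f_M^{\{M-1\}}]=T_D[(f_M^{\{M-1\}})^{[1]}]T_I[f_M^{\{M-1\}}]$.

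For the induction step, I set $\tilde L=L^{\{M-1\}}$. By the induction hypothesis, $(\tilde L^k)_{<0}$ has the form \eqref{Lax op2} with $m+M-1$ pairs:
\begin{align*}
\tilde q_{1,i}&=T^{\{\overrightarrow{M-1}\}}(q_{1i}), & \tilde q_{2,i}&=T^{\{\overrightarrow{M-1}\}}(q_{2i}),\\
\tilde q_{1,m+j}&=\bigl(T^{\{\overrightarrow{M-1}\}}L^k\bigr)(f_j), & \tilde q_{2,m+j}&=\bigl(T^{[\overrightarrow{M-1}\setminus\{j\},\overrightarrow{M-1}]}(f_j)\bigr)^{-1}.
\end{align*}
The proof of Lemma \ref{lemma:TLKT} uses only the form of $(L^k)_{<0}$, and the $\tilde q$'s are $\tilde L$-eigenfunctions. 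So the lemma applies to $\tilde L$ with $f=f_M^{\{M-1\}}$, giving three groups of terms:
\begin{enumerate}
\item a new $j=M$ term, since $T[f_M^{\{M-1\}}]\tilde L^k(f_M^{\{M-1\}})=T^{\{\overrightarrow{M}\}}L^k(f_M)$ and $T_I[f_M^{\{M-1\}}](f_M^{\{M-1\}})$ should equal $T^{[\overrightarrow{M}\setminus\{M\},\overrightarrow{M}]}(f_M)$;
\item the images $T^{\{\overrightarrow{M}\}}(q_{ai})$ of the original $q$'s;
\item the images under $T[f_M^{\{M-1\}}]$ of the old $j$-terms, $j<M$.
\end{enumerate}

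The main obstacle is the third group, specifically the identity
\begin{align*}
T[f_M^{\{M-1\}}]\Bigl(\bigl(T^{[\overrightarrow{M-1}\setminus\{j\},\overrightarrow{M-1}]}(f_j)\bigr)^{-1}\Bigr)=\bigl(T^{[\overrightarrow{M}\setminus\{j\},\overrightarrow{M}]}(f_j)\bigr)^{-1}.
\end{align*}
The first-group identification is the $j=M$ analogue and needs only commutativity plus \eqref{td-1}. For the third group, I will write $T[f_M^{\{M-1\}}]=T_D[\cdot]T_I[\cdot]$ and transport the inverse through each factor. For $T_I$ I use \eqref{ti-1}, for $T_D$ acting on an inverse I use \eqref{td-1} and \eqref{td-2}, and I repeatedly reorder factors with the commutation relations. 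The effect is to insert the extra $T_D[f_M]$ and $T_I[f_M]$ into the operator acting on $f_j$, which still omits $T_D[f_j]$. I expect the bookkeeping of which transformed eigenfunction enters each fundamental factor to be the delicate part. I would verify it first for $M=2$ directly via Lemma \ref{lem:T} and then argue the general case by the same reordering.
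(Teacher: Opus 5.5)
Your proposal is correct and follows essentially the same route as the paper. Both argue by induction through the factorization $T^{\{\overrightarrow{M}\}}=T[f_M^{\{M-1\}}]\,T^{\{\overrightarrow{M-1}\}}$, apply Lemma \ref{lemma:TLKT} to $L^{\{M-1\}}$ (whose negative part comes from the induction hypothesis), and make the same identifications, with the key one for the old $j$-terms obtained from Lemma \ref{lem:T}; the only cosmetic difference is that the paper starts the induction at $M=0$ rather than treating the lemma as the case $M=1$.
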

\begin{proof}
When $M=0$, it holds obviously. Assume that this proposition holds for $M$, we then prove that it also holds for $M+1$.

First we know
$T^{\{\overrightarrow{M+1}\}}=T[f_{M+1}^{\{M\}}]T^{\{\overrightarrow{M}\}}$,
then by Lemma \ref{lemma:TLKT},
\begin{align*}
\bigl(T^{\{\overrightarrow{M+1}\}}L^kT^{\{\overrightarrow{M+1}\}-1}\bigr)_{<0}
&=T\bigl[f_{M+1}^{\{{M}\}}\bigr]L^{\{M\}k}\bigl(f_{M+1}^{\{{M}\}}\bigr)\cdot\partial^{-1}\cdot\bigl(T_i[f_{M+1}^{\{{M}\}}]\bigl(f_{M+1}^{\{{M}\}}\bigr)\bigr)^{-1}\\
&-(-1)^k\bigl(T_i[f_{M+1}^{\{{M}\}}]\bigl(f_{M+1}^{\{{M}\}}\bigr)\bigr)^{-1}\cdot\partial^{-1}\cdot T\bigl[f_{M+1}^{\{{M}\}}\bigr]L^{{\{m}\}k}\bigl(f_{M+1}^{\{{M}\}}\bigr)\\
&+\sum_{j=1}^M\big(T\bigl[f_{M+1}^{\{{M}\}}\bigr]T^{\{\overrightarrow{M}\}}L^k(f_j)\cdot\partial^{-1}\cdot T[f_{M+1}^{\{{M}\}}]\bigl(T^{[\overrightarrow{M}\setminus\{j\},\overrightarrow{M}]}(f_j)\bigr)^{-1}\big)\\
&-(-1)^kT[f_{M+1}^{\{{M}\}}]\bigl(T^{[\overrightarrow{M}\setminus\{j\},\overrightarrow{M}]}(f_j)\bigr)^{-1}\cdot\partial^{-1}\cdot T\bigl[f_{M+1}^{\{{M}\}}\bigr]T^{\{\overrightarrow{M}\}}L^k(f_j)+\\
&\sum_{i=1}^m\Bigl(T^{\{\overrightarrow{M+1}\}}(q_{1i})\cdot\partial^{-1}\cdot T^{\{\overrightarrow{M+1}\}}(q_{2i})-(-1)^kT^{\{\overrightarrow{M+1}\}}(q_{2i})\cdot\partial^{-1}\cdot T^{[\overrightarrow{M+1}]}(q_{1i})\Bigr)
\end{align*}
Note that
\begin{align*}
&T\bigl[f_{M+1}^{\{{M}\}}\bigr]L^{\{M\}k}\bigl(f_{M+1}^{\{{M}\}}\bigr)=\left(T^{\{\overrightarrow{M+1}\}}L^k\right)(f_{M+1}),\\ &T\bigl[f_{M+1}^{\{{M}\}}\bigr]T^{\{\overrightarrow{M}\}}L^k(f_j)=\left(T^{\{\overrightarrow{M+1}\}}L^k\right)(f_{j}),\\
&\bigl(T_i\bigl(f_{M+1}^{\{{M}\}}\bigr)\bigl(f_{M+1}^{\{M\}}\bigr)\bigr)^{-1}=\left(T^{[\overrightarrow{M+1}\setminus\{M+1\},\overrightarrow{M+1}]}(f_{M+1})\right)^{-1},
\end{align*}
and by Lemma \ref{lem:T}
\begin{align*}
T\bigl(f_{M+1}^{\{\overrightarrow{M}\}}\bigr)\bigl(T^{[\overrightarrow{M}\setminus\{j\},\overrightarrow{M}]}(f_j)\bigr)^{-1}
=\left(T^{[\overrightarrow{M+1}\setminus\{j\},\overrightarrow{M+1}]}(f_j)\right)^{-1},
\end{align*}
Substituting these four relations into $\bigl(T^{\{\overrightarrow{M+1}\}}L^kT^{\{\overrightarrow{M+1}\}-1}\bigr)_{<0}$, we can find this proposition  holds for $M+1$.
\end{proof}
\begin{corollary}\label{fi}
Given $f_{i}(1\leq i\leq m)$ satisfying $f_{i_{t_n}}=f_i^{(n)}$ and  $\varphi_j=e^{\xi(t,\la_j)}(1\leq j\leq P)$, if denote $T^{\{\overrightarrow{M}\}}=T^{\{\overrightarrow{M}\}}(f_1,f_1^{(k)},\cdots,f_1^{(kN_1)},\cdots,f_m,f_m^{(k)},\cdots,f_m^{(kN_m)},\varphi_1,\cdots,\varphi_P)$ with $M=
P+m+\sum_{j=1}^mN_j$, then
\begin{align}\label{TM}
(T^{\{\overrightarrow{M}\}}\cdot\pa^k\cdot T^{\{\overrightarrow{M}\}-1} )_{<0}
&=\sum_{j=1}^m\big(T^{\{\overrightarrow{M}\}}(f_j^{(k(N_j+1))})\cdot\pa^{-1}\cdot \left(T^{[\overrightarrow{M}\setminus\{p(j)\}\, ,\overrightarrow{M}]}(f_j^{(kN_j)})\right)^{-1}\nonumber\\
&-(-1)^k\big(T^{[\overrightarrow{M}\setminus\{p(j)\}\, ,\overrightarrow{M}]}(f_j^{(kN_j)})\big)^{-1}\cdot\pa^{-1}\cdot \big(T^{\{\overrightarrow{M}\}}(f_j^{(k(N_j+1)})\big),
\end{align}
where $p(j)=j+\sum_{l=1}^jN_j$.
\end{corollary}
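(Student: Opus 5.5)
We apply Proposition \ref{Lax_(k,m)} to the trivial seed $L=\pa$, which corresponds to $W=1$, $\psi=e^{\xi(t,\la)}$ and $\tau=1$. Here $(L^k)_{<0}=0$, so the seed is a $(k,0)$-constrained CKP Lax operator with no $q$'s. We first check that every entry of $(h_1,\ldots,h_M)$ is a CKP eigenfunction for $\pa$. For the seed, $B_n=\pa^n$ with $n$ odd, and $\pa_{t_n}f_i^{(kl)}=f_i^{(kl+n)}=\pa^n(f_i^{(kl)})$. Likewise $\pa_{t_n}\varphi_j=\la_j^n\varphi_j=\pa^n\varphi_j$. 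Proposition \ref{Lax_(k,m)} with $m=0$ then gives
\begin{align*}
(T^{\{\overrightarrow{M}\}}\pa^kT^{\{\overrightarrow{M}\}-1})_{<0}
=\sum_{l=1}^{M}\Big(T^{\{\overrightarrow{M}\}}(h_l^{(k)})\cdot\pa^{-1}\cdot (T_l)^{-1}
-(-1)^k(T_l)^{-1}\cdot\pa^{-1}\cdot T^{\{\overrightarrow{M}\}}(h_l^{(k)})\Big),
\end{align*}
where we write $T_l=T^{[\overrightarrow{M}\setminus\{l\},\overrightarrow{M}]}(h_l)$, and we use $(T^{\{\overrightarrow{M}\}}\pa^k)(h_l)=T^{\{\overrightarrow{M}\}}(h_l^{(k)})$.

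The main step is to show that almost all of these $M$ terms vanish. The operator $T^{\{\overrightarrow{M}\}}$ is built from the KP Darboux operators $T_D[\cdot]$ applied to $h_1,\ldots,h_M$, so its kernel contains every $h_l$. Equivalently, $T^{\{\overrightarrow{M}\}}(h_l)=0$, which can also be read off the Wronskian formula \eqref{2M=}, since a repeated column makes the determinant vanish.

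We then sort the indices $l$ into three types.
\begin{itemize}
\item If $h_l=f_j^{(kr)}$ with $r<N_j$, then $h_l^{(k)}=f_j^{(k(r+1))}$ is again one of the $h$'s, so $T^{\{\overrightarrow{M}\}}(h_l^{(k)})=0$ and the whole $l$-th term is zero.
\item If $h_l=\varphi_j$, then $h_l^{(k)}=\la_j^k\varphi_j$, which is a multiple of an $h$, so this term vanishes as well.
\item If $h_l=f_j^{(kN_j)}$, which is exactly the case $l=p(j)$, then $h_l^{(k)}=f_j^{(k(N_j+1))}$ is not among the $h$'s, and the term survives.
\end{itemize}
Collecting the surviving indices $l=p(j)$ for $j=1,\ldots,m$ yields exactly \eqref{TM}.

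The step needing the most care is the vanishing $T^{\{\overrightarrow{M}\}}(h_l)=0$ for every $l$, not just for the last one. We plan to justify it with the commutation relations of $T_D$ and $T_I$ quoted before Lemma \ref{lem:T}, which let us move any $h_l$ to the first position, where $T_D[h_l]$ annihilates it. Since the outer factors $T_I$ are integral-type and invertible, they do not change this.

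A second point needs checking. Each term of the proposition is symmetric under the reordering of the $h$'s that the commutation relations produce, so moving $h_l$ to the front does not alter the surviving expressions $T^{[\overrightarrow{M}\setminus\{p(j)\},\overrightarrow{M}]}(f_j^{(kN_j)})$ beyond what \eqref{TM} records. Provided both points hold, the corollary follows.
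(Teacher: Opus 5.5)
Your proposal is correct and follows essentially the same route as the paper. You apply Proposition \ref{Lax_(k,m)} to the seed $L=\pa$ with no $q$'s, then discard every term except $l=p(j)$ using $T^{\{\overrightarrow{M}\}}(f_j^{((l+1)k)})=0$ for $0\le l\le N_j-1$ and $T^{\{\overrightarrow{M}\}}(\varphi_i^{(k)})=\la_i^kT^{\{\overrightarrow{M}\}}(\varphi_i)=0$; your second worry is unnecessary, since reordering only establishes a property of the unchanged operator $T^{\{\overrightarrow{M}\}}$, and the vanishing already follows from the repeated column in \eqref{2M=}.
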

\begin{proof}
First by Proposition \ref{Lax_(k,m)}, we have
\begin{align*}
(T^{\{\overrightarrow{M}\}}\cdot\pa^k\cdot T^{\{\overrightarrow{M}\}-1} )_{<0}
=\sum_{j=1}^m\sum_{l=0}^{N_j}T^{\{\overrightarrow{M}\}} (f_j^{((l+1)k)})\cdot\pa^{-1}\cdot \big(T^{[\overrightarrow{M}\setminus\{\sum_{l=1}^{j-1}N_l+j+l\}\, ,\overrightarrow{M}]}(f_j^{(lk)}) \big)^{-1}\\
-(-1)^k\sum_{j=1}^m\sum_{l=0}^{N_j}\big(T^{[\overrightarrow{M}\setminus\{\sum_{l=1}^{j-1}N_l+j+l\}\, ,\overrightarrow{M}]}(f_j^{(lk)}) \big)^{-1}\cdot\pa^{-1}\cdot T^{\{\overrightarrow{M}\}} (f_j^{((l+1)k)})\\
+\sum_{i=1}^P(T^{\{\overrightarrow{M}\}}\pa^k)(\psi_i)\cdot \pa^{-1}\cdot \big(T^{[\overrightarrow{M}\setminus\{\sum_{l=1}^mN_l+m+i\}\, ,\overrightarrow{M}]}(\varphi_i)\big)^{-1}\\
-(-1)^k\sum_{i=1}^P\big(T^{[\overrightarrow{M}\setminus\{\sum_{l=1}^mN_l+m+i\}\, ,\overrightarrow{M}]}(\varphi_i)\big)^{-1}\cdot\pa^{-1}\cdot (T^{\{\overrightarrow{M}\}}\pa^k)(\varphi_i).
\end{align*}
Then note that for $0\leq l \leq N_j-1$
\begin{align*}
T^{[\overrightarrow{M}]}(f_j^{((l+1)k)})=0,
\quad
T^{[\overrightarrow{M}]} (\varphi_i^{(k)})=\la_i^kT^{[\overrightarrow{M}]}(\varphi_i)=0,
\end{align*}
we can know \eqref{TM} holds.
\end{proof}
\subsection{Proof of {\bf Theorem} \ref{THE}}
After the above preparation, we give the proof of {\bf Theorem} \ref{THE}. Firstly by corollary \ref{fi}, if denote
\begin{align*}
&L^{\{M\}}=T^{\{\overrightarrow{M}\}}\cdot \pa \cdot T^{\{\overrightarrow{M}\}-1},\\
&q_{1,j}^{\{M\}}=T^{\{\overrightarrow{M}\}}(f_j^{(k(N_j+1)}),\\
&q_{2,j}^{\{M\}}=\big(T^{[\overrightarrow{M}\setminus\{j+\sum_{l=1}^jN_l\}\, ,\overrightarrow{M}]}(f_j^{(kN_j)})\big)^{-1},
\end{align*}
then we have
\begin{align}\label{LMK}
(L^{\{M\}k})_{<0}=\sum_{i=1}^{m}\big(q_{1,j}^{\{M\}}\partial^{-1}q_{2,j}^{\{M\}}-(-1)^kq_{2,j}^{\{M\}}\partial^{-1}q_{1,j}^{\{M\}}\big).
\end{align}

 Notice that $(L^{\{0\}},q_{1,i}^{\{0\}},q_{2,i}^{\{0\}})=(\pa ,0,0) $ is a trivial solution to the $(k,m)$-constrained KP system. Therefore, \eqref{LMK} is the transformed result of the above trivial solution under the CKP Darboux transformation $T^{\{\overrightarrow{M}\}}$, and is also a solution to the $(k,m)$-constrained CKP hierarchy, also satisfying (please refer to \eqref{tm})
\begin{align}
&L_{t_n}^{\{M\}}=[(L^{\{M\}n})_{\geq0},L^{\{M\}}], \quad L^{\{M\}*}=-L^{\{M\}},  \\
&\pa_{t_n}q_{a,i}^{\{M\}}=(L^{\{M\}n})_{\geq0}(q_{a,i}^{\{M\}}), \quad a=1,2, \quad i=1,2,\ldots,m.
\end{align}

Next, note that the tau-function $\tau^{\{0\}}=1$ corresponding to the CKP Lax operator $L^{\{0\}}=\pa$, then by \eqref{mm},
\begin{align*} \tau^{\{M\}}=\sqrt{IW_{M,M}(h_1,\ldots,h_M;h_1,\ldots,h_M)} \end{align*}
is the CKP tau-function. Therefore
\begin{align*}
\psi^{\{M\}}(t,\la)=e^{\xi(t,\la)}\sqrt{\varphi^{\{M\}}(t,\la)}\frac{\tau^{\{M\}}(t-2[\la^{-1}])}{\tau^{\{M\}}(t)}
\end{align*}
is the corresponding CKP wave function, satisfying
\begin{align}
		{\rm Res}_\lambda \psi^{\{M\}}(t,\lambda)\psi^{\{M\}}(t',-\lambda)=0.
	\end{align}
So by Theorem \ref{k,m}, we can know
\begin{align}
				&{\rm Res}_\la \la^k\psi^{\{M\}}(t,\la)\psi^{\{M\}}(t',-\la)=\sum^m_{i=1}(q_{1,i}^{\{M\}}(t)q_{2,i}^{\{M\}}(t')-(-1)^kq_{2,i}^{\{M\}}(t)q_{1,i}^{\{M\}}(t')), \label{theorem2:bili1}\\
				&{\rm Res}_\la \psi^{\{M\}}(t,\la)\Om^{\{M\}}(q_{a,i}(t'),\psi^{\{M\}}(t',-\la))=-q_{a,i}^{\{M\}}(t), \quad a=1,2, \quad i=1,2,\ldots,m.
				\label{theorem2:bili2}
			\end{align}
Further if we set
\begin{align*}
\rho_{1,i}^{\{M\}}=q_{1,i}^{\{M\}}\tau^{\{M\}}, \quad
\rho_{2,i}^{\{M\}}=q_{2,i}^{\{M\}}\tau^{\{M\}},
\end{align*}
then according to Theorem \ref{k,m} again, we know they satisfy \eqref{res1} and \eqref{res2}.

According to \eqref{2M=} \eqref{2M<}, we know
\begin{align*}
&q_{1,j}^{{\{M}\}}=T^{\{\overrightarrow{M}\}}(f_j^{(kN_j+k)})
=\frac{IW_{M,M+1}(h_1,\ldots,h_M;h_1,\ldots,h_M,f_j^{(kN_j+k)})}{IW_{M,M}(h_1,\ldots,h_M;h_1\ldots,h_M)},\\
&q_{2,j}^{{\{M}\}}=(T^{[\overrightarrow{M}\setminus\{j+\sum_{l=1}^jN_l\},\overrightarrow{M}]}(f_j^{(kN_j)}))^{-1}
=\frac{(-1)^{p(j)+M}IW_{M-1,M}(h_1,\ldots,\widehat{h_{p(j)}},\ldots,h_M;h_1.\ldots,h_M)}{IW_{M,M}(h_1,\ldots,h_M;h_1,\ldots,h_M
)},
\end{align*}
where $p(j)=j+\sum_{l=1}^jN_l$, implying
\begin{align*}
&\rho_{1,i}^{\{M\}}=\frac{IW_{M,M+1}(h_1,\ldots,h_M;h_1,\ldots,h_M,f_j^{(kN_j+k)})}{\sqrt{IW_{M,M}(h_1,\ldots,h_M;h_1,\ldots,h_M)}},\\
&\rho_{2,i}^{\{M\}}=\frac{(-1)^{M+p(j)}IW_{M-1,M}(h_1,\ldots,\widehat{h_{p(j)}},\ldots,h_M;h_1,\ldots,h_M)}{\sqrt{IW_{M,M}(h_1,\ldots,h_M;h_1,\ldots,h_M)}}.
\end{align*}
Therefore we have proved {\bf Theorem} \ref{THE}.

\section{Examples of the constrained CKP hierarchy}
In this section, we will list some explicit examples of the constrained CKP hierarchy and give the corresponding solutions.
\subsection{The $(1,m)$-constrained CKP hierarchy}
	When $k=1$, the Lax operator $L$ of the $(1,m)$-constrained CKP hierarchy is given by
\begin{align}\label{(1,m)}
	L=\pa+\sum_{i=1}^m\left(q_{1i}\pa^{-1}q_{2i}+q_{2i}\pa^{-1}q_{1i}\right).
\end{align}
\begin{example}
	$q_{1,j}$ and $q_{2,j}$ in \eqref{(1,m)} will satisfy the following equations
\begin{align*}
	&q_{1j,t_3}=q_{1j,xxx}+3\sum_{i=1}^m(2q_{1i}q_{2i}q_{1j,x}+q_{2i}q_{1i,x}q_{1j}+q_{1i}q_{2i,x}q_{1j}),\\
	&q_{2j,t_3}=q_{2j,xxx}+3\sum_{i=1}^m(2q_{1i}q_{2i}q_{2j,x}+q_{1i,x}q_{2i}q_{2j}+q_{1i}q_{2i,x}q_{2j}).
\end{align*}
In particular, when $m=1$, we set $q_1\triangleq q_{1,1}$, $q_2\triangleq q_{2,1}$, $t\triangleq t_3$ then we have
\begin{align}\label{q12}
	&q_{1,t}=q_{1,xxx}+9q_1q_2q_{1,x}+3q_1^2q_{2,x},\nonumber\\
	&q_{2,t}=q_{2,xxx}+9q_1q_2q_{2,x}+3q_2^2q_{1,x}.
\end{align}
When $q_1=q_2$, \eqref{q12} is the mKdV equation
\begin{align*}
q_{t}=q_{xxx}+9q^2q_{x}+3q^2q_x.
\end{align*}

Next we will give the solutions of \eqref{q12}. For this, let us set $N_1=1$, $P=0$, $M=2$, $m=1$,  $f_1=e^{x+t}+e^{2x+8t}+e^{3x+27t}$ in Theorem \ref{THE} then we can know
\begin{align*}
	\tau^{\{2\}}=& \sqrt{
		\begin{vmatrix}
			\Omega(f_1,f_1) & \Omega(f_{1,x},f_1) \\
			\Omega(f_1,f_{1,x}) & \Omega(f_{1,x},f_{1,x})
		\end{vmatrix}
	}\\
	=&(\frac{1}{600}e^{10x+70t}+\frac{1}{72}e^{6x+18t}+\frac{1}{12}e^{8x+56t}\\
	+&\frac{1}{15}e^{7x+37t}+\frac{1}{120}e^{8x+44t}+\frac{1}{45}e^{9x+63t})^{1/2},\\
\rho_{1}^{\{2\}} =&
	\frac{
		\begin{vmatrix}
			\Omega(f_1,f_1) & \Omega(f_{1,x},f_1) & \Omega(f_{1,xx},f_1) \\
			\Omega(f_1,f_{1,x}) & \Omega(f_{1,x},f_{1,x}) & \Omega(f_{1,xx},f_{1,x}) \\
			f_1 & f_{1,x} & f_{1,xx}
		\end{vmatrix}
	}{
		\sqrt{
			\begin{vmatrix}
				\Omega(f_1,f_1) & \Omega(f_{1,x},f_1) \\
				\Omega(f_1,f_{1,x}) & \Omega(f_{1,x},f_{1,x})
			\end{vmatrix}
		}
	}\\
	=&\frac{1}{30}(10e^{10x+64t}+5e^{9x+45t}+e^{11x+71t})/(6e^{10x+71t}+50e^{6x+18t}\\
+&300e^{8x+56t}+240e^{7x+37t}+30e^{8x+44t}+80e^{9x+63t})^{1/2},
\end{align*}
\begin{align*}	
	\rho_{2}^{\{2\}}=&\frac{
		\begin{vmatrix}
			\Omega(f_1,f_1)  & \Omega(f_{1,x},f_1)\\
			f_1 & f_{1,x}
		\end{vmatrix}
	}{
		\sqrt{
			\begin{vmatrix}
				\Omega(f_1,f_1) & \Omega(f_{1,x},f_1) \\
				\Omega(f_1,f_{1,x}) & \Omega(f_{1,x},f_{1,x})
			\end{vmatrix}
		}
	}\\
	=&(10e^{4x+10t}+30e^{5x+29t}+5e^{5x+17t}+24e^{6x+36t}+10e^{7x+55t}\\
	+&3e^{7x+43t}+2e^{8x+62t})/(6e^{10x+70t}+50e^{6x+18t}\\
	+&300e^{8x+56t}+240e^{7x+37t}+30e^{8x+44t}+80e^{9x+63t})^{1/2},
\end{align*}
	where we have set $\rho_a=\rho_{a,i}$.
	 \begin{align*}
	 	&\Omega(f_1,f_1)=\frac{1}{2}e^{2x+2t}+\frac{2}{3}e^{3x+9t}+\frac{1}{2}e^{4x+28t}+\frac{1}{4}e^{4x+16t}+\frac{2}{5}e^{5x+35t}+\frac{1}{6}e^{6x+54t},\\
	 	&\Omega(f_{1,x},f_1)=\frac{1}{2}e^{2x+2t}+e^{3x+9t}+e^{4x+28t}+\frac{1}{2}e^{4x+16t}+e^{5x+35t}+\frac{1}{2}e^{6x+54t},\\
	 	&\Omega(f_{1,xx},f_1)=\frac{1}{2}e^{2x+2t}+e^{4x+16t}+\frac{3}{2}e^{6x+54t}+\frac{5}{3}e^{3x+9t}+\frac{5}{2}e^{4x+28t}+\frac{13}{5}e^{5x+35t},\\
	 	&\Omega(f_{1,xx},f_{1,x})=\frac{1}{2}e^{2x+2t}+2e^{3x+9t}+3e^{4x+28t}+2e^{4x+16t}+6e^{5x+35t}+\frac{9}{2}e^{6x+54t},\\
	 	&\Omega(f_{1,x},f_{1,x})=\frac{1}{2}e^{2x+2t}+\frac{4}{3}e^{3x+9t}+\frac{3}{2}e^{4x+28t}+e^{4x+16t}+\frac{12}{5}e^{5x+35t}+\frac{3}{2}e^{6x+54t}.
	 \end{align*}
	 Then by Theorem  \ref{THE} we can know
\begin{align*}
	q_{1}=\frac{\rho_{1}^{\{2\}}}{\tau^{\{2\}}}=&(10e^{4x+46t}+5e^{3x+27t}+e^{5x+53t})/(25+120e^{x+19t}\\
	&+150e^{2x+38t}+15e^{2x+26t}+40e^{3x+45t}+3e^{4x+52t}),\\
	q_{2}=\frac{\rho_{2}^{\{2\}}}{\tau^{\{2\}}}=&(30(10+30e^{x+19t}+5e^{x+7t}+24e^{2x+26t}\\
	&+10e^{3x+45t}+3e^{3x+33t}+2e^{4x+52t}))/(25e^{2x+8t}+120e^{3x+27t}\\
	&+150e^{4x+46t}+15e^{4x+34t}+40e^{5x+53t}+3e^{6x+60t}).
\end{align*}
\end{example}
\begin{example}
In particular when $m=1$, we set $q_{1,1}\triangleq q_1$, $q_{2,1}\triangleq q_2$, $t\triangleq t_5$, then we have
\begin{align}\label{x^5}
	q_{1,t}=&q_{1,xxxxx}+20q_{1,x}^2q_{2,x}+30q_2q_{1,x}q_{1,xx}+25q_1q_{2,x}q_{1,xx}\nonumber\\
	         \ &+25q_1q_{1,x}q_{2,xx}+15q_1q_2q_{1,xxx}+5q_1^2q_{2,xxx}+80q_1^2q_2^2q_{1,x}+40q_1^3q_2q_{2,x},\nonumber\\
	q_{2,t}=&q_{2,xxxxx}+20q_{2,x}^2q_{1,x}+30q_1q_{2,x}q_{2,xx}+25q_2q_{1,x}q_{2,xx}\nonumber\\
	         \ &+25q_2q_{2,x}q_{1,xx}+15q_2q_1q_{2,xxx}+5q_2^2q_{1,xxx}+80q_2^2q_1^2q_{2,x}+40q_2^3q_1q_{1,x}.
\end{align}
Next we will give the solutions of \eqref{x^5}. For this, let us set $N_1=1$, $P=0$, $M=2$, $m=1$,  $f_1=x^2$,  in Theorem \ref{THE}, then
\begin{align*}
	\tau^{\{2\}} =& \sqrt{
		\begin{vmatrix}
			\Omega(f_1,f_1) & \Omega(f_{1,x},f_1) \\
			\Omega(f_1,f_{1,x}) & \Omega(f_{1,x},f_{1,x})
		\end{vmatrix}
	}
	=\sqrt{(\frac{1}{60}x^8+\frac{16}{3}x^3t)},\\
	\rho_{1}^{\{2\}} =&
	\frac{
		\begin{vmatrix}
			\Omega(f_1,f_1) & \Omega(f_{1,x},f_1) & \Omega(f_{1,xx},f_1) \\
			\Omega(f_1,f_{1,x}) & \Omega(f_{1,x},f_{1,x}) & \Omega(f_{1,xx},f_{1,x}) \\
			f_1 & f_{1,x} & f_{1,xx}
		\end{vmatrix}
	}{
		\sqrt{
			\begin{vmatrix}
				\Omega(f_1,f_1) & \Omega(f_{1,x},f_1) \\
				\Omega(f_1,f_{1,x}) & \Omega(f_{1,x},f_{1,x})
			\end{vmatrix}
		}
	}
	=\frac{\sqrt{15}}{45}\frac{x^3(x^5-480t)}{\sqrt{x^3(x^5+320t)}},\\
	\rho_{2}^{\{2\}}=&\frac{
		\begin{vmatrix}
			\Omega(f_1,f_1)  & \Omega(f_{1,x},f_1)\\
			f_1 & f_{1,x}
		\end{vmatrix}
	}{
		\sqrt{
			\begin{vmatrix}
				\Omega(f_1,f_1) & \Omega(f_{1,x},f_1) \\
				\Omega(f_1,f_{1,x}) & \Omega(f_{1,x},f_{1,x})
			\end{vmatrix}
		}
	}
	=\frac{\sqrt{15}}{5}\frac{x(-x^5+80t)}{\sqrt{x^3(x^5+320t)}}.
\end{align*}
Here $\Omega$ is computed by the way below
\begin{align*}
	\Omega(f,g)=\int_0^1 \left(xA(xy,ty)+tB(xy,ty)\right) dy ,
\end{align*}
where $A=\Omega(f,g)_x=fg$, $B=\Omega(f,g)_t=f_{xxxx}g-f_{xxx}g_{x}+f_{xx}g_{xx}-f_{x}g_{xxx}+fg_{xxxx}.$ It can be found that
\begin{align*}
	&\Omega(f_1,f_1)=\frac{1}{5}x^5+4t, \
	\Omega(f_{1,x},f_1)=\frac{1}{2}x^4, \
	\Omega(f_{1,xx},f_1)=\frac{2}{3}x^3,\\
	&\Omega(f_{1,xx},f_{1,x})=2x^2, \
	\Omega(f_{1,x},f_{1,x})=\frac{4}{3}x^3,
\end{align*}
then we have
\begin{align*}
	q_1=\frac{2}{3}\frac{x^5-480t}{x^5+320t}, \quad
	q_2=\frac{6(-x^5+80t)}{x^2(x^5+320t)}.
\end{align*}
\end{example}
\subsection{The $(2,1)$-constrained CKP hierarchy}
When $k=2, m=1$, the Lax operator $L$ of the constrained CKP hierarchy is given by
\begin{align*}
	L^2=\pa^2+u+q_{1}\pa^{-1}q_{2}-q_{2}\pa^{-1}q_{1},
\end{align*}
In terms of tau functions, we have $(1\leq j \leq m)$
\begin{align*}
	&u=4\frac{\tau(t)_{xx}\tau(t)-\tau(t)_x^2}{\tau^2(t)},\\
	&q_1=\frac{\rho_{1}}{\tau}, \quad q_2=\frac{\rho_{2}}{\tau}.
\end{align*}
	\begin{example}
		$u$, $q_1$, $q_2$ satisfy the following equations
\begin{align}
	&u_{t}=3q_{2}q_{1,xx}+\frac{3}{2}uu_x-3q_{1}q_{2,xx}+\frac{1}{4}u_{xxx},\label{coupledkdv1}\\
	&q_{1,t}=q_{1,xxx}+\frac{3}{2}uq_{1,x} +\frac{3}{4}u_xq_1,\label{coupledkdv2}\\
	&q_{2,t}=q_{2,xxx}+\frac{3}{2}uq_{2,x} +\frac{3}{4}u_xq_2.\label{coupledkdv3}
\end{align}
\eqref{coupledkdv1}-\eqref{coupledkdv3} is called the coupled KdV system \cite{loris2001jpa},
which is the generalization of the KdV equation.

Next we will give the solutions of \eqref{coupledkdv1}-\eqref{coupledkdv3}. For this, let us set $N_1=0$, $P=1$, $M=2$, $m=1$,  $f=x^2$, $\varphi=e^{x+t}$ in Theorem \ref{THE}, then we have
\begin{align*}
q_{1} =&
\frac{
	\begin{vmatrix}
		\Omega(f,f) & \Omega(\varphi,f) & \Omega(f^{(2)},f) \\
		\Omega(f,\varphi) & \Omega(\varphi,\varphi) & \Omega(f^{(2)},\varphi) \\
		f & \varphi & f^{(2)}
	\end{vmatrix}
}{
		\begin{vmatrix}
			\Omega(f,f) & \Omega(\varphi,f) \\
			\Omega(f,\varphi) & \Omega(\varphi,\varphi)
		\end{vmatrix}	
}\\
=&\frac{4(x^5-10x^4+15tx^2+40x^3-60tx-90x^2+60t+120x-60)}{3(x^5-10x^4+40x^3-80x^2+80x-40)},\\
q_{2} =&-
\frac{
	\begin{vmatrix}
		\Omega(f,\varphi)  & \Omega(\varphi,\varphi)\\
		f & \varphi
	\end{vmatrix}
}{
	\begin{vmatrix}
		\Omega(f,f) & \Omega(\varphi,f) \\
		\Omega(f,\varphi) & \Omega(\varphi,\varphi)
	\end{vmatrix}	
}\\
=&-\frac{5(x^2-4x+4)}{x^5-10x^4+40x^3-80x^2+80x-40},\\ u=&-\frac{10x(x^7-16x^6+112x^5-448x^4+1120x^3-1760x^2+1600x-640)}{(x^5-10x^4+40x^3-80x^2+80x-40)^2}.
\end{align*}
\end{example}

\section{Conclusions and Discussions}
In this paper, we construct the solutions of the $(k,m)$-constrained CKP hierarchy \eqref{Lax op}-\eqref{LAx} in terms of CKP tau functions (see \eqref{bili*}) by the KP Darboux transformations. We firstly convert the $(k,m)$-constrained CKP hierarchy \eqref{Lax op}-\eqref{LAx} into the equivalent bilinear equations \eqref{theorem:bili1}-\eqref{theorem:bili2} in terms of wave functions, then these bilinear equations are rewritten by introducing the CKP tau function $\tau(t)$ and auxiliary functions $\rho_{a,i}(t)$ (see \eqref{res1}-\eqref{res2}). After that, the explicit solutions for the CKP tau function $\tau(t)$ and auxiliary functions $\rho_{a,i}(t)$ are obtained in Theorem \ref{THE}, satisfying the above equivalent bilinear equations of the constrained CKP hierarchy in terms of wave functions.

Compared with the existed results on the constrained CKP hierarchy, our results have the following improvements:
\begin{itemize}
	\item The tau functions are used in the discussions of the constrained CKP hierarchy. Notice that the CKP tau function $\tau(t)$ in \eqref{bili*} is quite special due to the square root term in \eqref{bili*}, but it contains all the CKP information. Here we establish the relation for the CKP tau function $\tau(t)$ that can determine the whole $(k,m)$-constrained CKP hierarchy \eqref{Lax op}-\eqref{LAx}.
	\item The $(k,m)$-constrained CKP hierarchy with even $k$ is discussed. As far as we can know, there are very few papers for even $k$. \cite{loris2001jpa} is the only reference that we can know, where the system for even $k$ is just raised without discussing the corresponding solutions. The $(k,m)$-constrained CKP hierarchy with even $k$ is not the classical symmetry reduction for the CKP hierarchy.
	\item Our solutions are comparatively more general. For the $(k,1)$-constrained CKP hierarchy with odd $k$, the corresponding solutions are given in \cite{Loris-1999}, and the determinant formulas are given in \cite{he2007jmp} without explicit solutions. Notice that the cases of $m>1$ are missing. In fact, it is usually very difficult to determine the seed solutions for $m>1$. For $m=1$, we can consider the initial solution $L^k=\partial^k+\partial^{-1}$, and it is not easy to fix the explicit expression of eigenfunction $q$ which satisfies $q_{t_n}=(L^n)_{\geq 0}(q)$. While in the case of $m>1$, it becomes more complicated in determining the seed solutions. While in Eq, the condition is not easy to be generalized into the case of $m>1$. But here, we overcome these questions by starting from the initial solution $L=\partial$, where one can easily obtain the eigenfunction $q$ satsifying $q_{t_n}=q^{(n)}$. And our results contain the ones in \cite{Loris-1999} by setting $m=1$ and $P=0$ in Theorem \ref{THE}.
\end{itemize}
\noindent{\bf Acknowledgements}: \\
This article is dedicated to Professor Ke Wu in Capital Normal University in celebration of his 80th birthday. And this work is supported by National Natural Science Foundation of China
(Grant Nos. 12571271 and 12261072).\\

\noindent{\bf Conflict of Interest}: \\
The author have no conflicts to disclose.\\

\noindent{\bf Data availability}: \\
Date sharing is not applicable to this article as no new data were created or analyzed in this study.\\


\begin{thebibliography}{99}
\bibitem{Arthamonov-2023}
Arthamonov S, Harnad J and Hurtubise J,
Lagrangian Grassmannians, CKP hierarchy and hyperdeterminantal relations,
Comm. Math. Phys. 401 (2023) 1337-1381.


\bibitem{chang2013}
Chang L and Wu C Z,
Tau function of the CKP hierarchy and nonlinearizable Virasoro symmetries,
Nonlinearity 26 (2013) 2577-2596.

\bibitem{cheng2014}
Cheng J P and He J S,
The ``ghost" symmetry in the CKP hierarchy,
J. Geom. Phys. 80 (2014) 49-57.



\bibitem{ChengY-1992}
Cheng Y,
Constraints of the Kadomtsev-Petviashvili hierarchy,
J. Math. Phys. 33 (1992) 3774-3782.


\bibitem{ChengY-1994}
Cheng Y and Zhang Y J,
Bilinear equations for the constrained KP hierarchy,
Inverse Problems 10 (1994) L11-L17.

\bibitem{date1981JPAJ}
Date E, Jimbo M, Kashiwara M and Miwa T,
Transformation groups for soliton equations. VI. KP hierarchies of orthogonal and symplectic type,
J. Phys. Soc. Japan  50 (1981) 3813-3818.

\bibitem{date1983}
Date E, Jimbo M, Kashiwara M and MiwaT,
Transformation groups for soliton equations, in {\it Nonlinear integrable systems-classical theory and quantum theory\/}, ed. by M. Jimbo and T. Miwa, World Scientific, Singapore, 1983, pp. 39-119.

\bibitem{dickey2003}
Dickey L A,
Soliton Equations and Hamiltonian Systems, in \emph{Advanced Series in Mathematical Physics}, 26. World Scientific Publishing Co., Inc., River Edge, NJ, 2003.



\bibitem{he2002}
He J S, Li Y S and Cheng Y, The determinant representation of the gauge transformation operators, Chin. Ann. Math. Ser. B. 23 (2002) 475-486.

\bibitem{he2007jmp}
He J S, Wu Z W and Cheng Y,
Gauge transformations for the constrained CKP and BKP hierarchies,
J. Math. Phys. 48 (2007) 113519.




\bibitem{Konopelchenko1984}
Konopelchenko B G and Dubrovsky V G, Some new integrable nonlinear
evolution equations in 2+1 dimensions, Phys. Lett. A. 102 (1984) 15-17.

\bibitem{Krichever-CMP-2021}
Krichever I and Zabrodin A,
Kadomtsev-Petviashvili turning points and CKP hierarchy,
Comm. Math. Phys. 386 (2021) 1643-1683.

\bibitem{Krichever-Zabrodin-LMP-2021}
Krichever I and Zabrodin A V,
Constrained Toda hierarchy and turning points of the Ruijsenaars-Schneider model,
Lett. Math. Phys. 112 (2022) Paper No. 23.

\bibitem{lichunxia}
 Li C X and Li S H, The Cauchy two-matrix model, C-Toda lattice and CKP hierarchy, J. Nonlinear Sci. 29 (2019) 3-27.

\bibitem{lishihao}Li S H,  Tsujimoto S, Watanabe R and  Yu G F,Cauchy-Jacobi orthogonal polynomials and the discrete CKP equation, Lett. Math. Phys. 115 (2025) 116.

\bibitem{Loris-1999}
Loris I,
On reduced CKP equations,
Inverse Problems 15 (1999) 1099-1109.

\bibitem{loris2001jpa}
 Loris I,
 Dimensional reductions of BKP and CKP hierarchies,
 J. Phys. A. 34 (2001) 3447-3459.

\bibitem{oevel1993pa}
Oevel W, Darboux theorems and Wronskian formulas for integrable systems. I. Constrained KP flows, Phys. A 195 (1993) 533-576.



\bibitem{Sidorenko-1991}
Sidorenko J and Strampp W,
Symmetry constraints of the KP hierarchy,
Inverse Problems 7 (1991) L37-L43.

\bibitem{vandeleur2012}
van de Leur J, Orlov A Y and Shiota T,
CKP hierarchy, bosonic tau function and bosonization formulae,
SIGMA Symmetry Integrability Geom. Methods Appl. 8 (2012) 036.

\bibitem{WangS-Non-2025}
Wang S, Guan W C and Cheng J P,
Bosonic construction of CKP tau function,
Nonlinearity 38 (2025) 015009.

\bibitem{WangS-PhD-2026}
Wang S, Wang J B, Guan W C and Cheng J P,
On symmetries of modified CKP hierarchy.
Phys. D 490 (2026) 135182.

\bibitem{yang2021}
Yang Y, Geng L M and Cheng J P,
CKP hierarchy and free bosons,
J. Math. Phys. 62 (2021) 083506.

\bibitem{yang2022}
Yang Y and Cheng J P, Bilinear equations in Darboux transformations by Boson--Fermion correspondence, Phys. D 433 (2022) 30.

\bibitem{Zabrodin-2023}
Zabrodin A V,
Tau-function of the multi-component CKP hierarchy,
Math. Phys. Anal. Geom. 27 (2024) Paper No.1.

\bibitem{Mulase1994}
Mulase M, Algebraic theory of the KP equations, in {\it Perspectives in mathematical physics}, Int. Press, Cambridge, MA, 1994, pp.151-217.

\bibitem{Zabrodin2021}
Zabrodin A V, Kadomtsev-Petviashvili hierarchies of types B and C, Theoret. Math. Phys. 208 (2021) 865-885.

\bibitem{Wangshen2025}
Wang S, Guan W C and Cheng J P, Tau functions of modified CKP hierarchy, J. Geom. Phys. 207 (2025) 105367.

\bibitem{Kacvandeleur2023}
Kac V and van de Leur J, Multicomponent KP type hierarchies and their reductions, associated to conjugacy classes of Weyl groups of classical Lie algebras, J. Math. Phys. 64 (2023) 091702.

\bibitem{OeveleStrampp}Oevel W and Strampp W, Constrained KP hierarchy and bi-Hamiltonian structures, Comm. Math. Phys. 157 (1993) 51-81.

\bibitem{Tiankl2011}Tian K L, He J S, Cheng J P and Cheng Y, Additional symmetries of constrained CKP and BKP hierarchies, Sci. China Math. 54 (2011) 257-268.


\bibitem{Cheng-Milanov} Cheng J P and Milanov T,  The extended D-Toda hierarchy. Selecta Math. (N.S.) 27 (2021) 24.


\bibitem{Wuyq2023}Wu Y Q and Cheng J P, A new generalized constrained modified KP hierarchy, Math. Methods Appl. Sci. 46 (2023) 3510-3521.

\bibitem{ChauShawTu1997}Chau L L, Shaw J C and Tu M H, Solving the constrained KP hierarchy by gauge transformations, J. Math. Phys. 38 (1997) 4128-4137.


\end{thebibliography}
\end{document}